\def\eps{\epsilon}
\newcommand{\hide}[1]{}
\newtheorem{theorem}{Theorem}
\newtheorem{definition}{Definition}
\newtheorem{lemma}[theorem]{Lemma}
\newtheorem{corollary}[theorem]{Corollary}
\theoremstyle{plain}
\newenvironment{proof}{\noindent {\sc Proof.  }}{$\Box$ \medskip}
\newenvironment{proofnodot}{\noindent {\sc Proof  }}{$\Box$ \medskip}
\newcommand{\marginlabel}[1]%
{\mbox{}\marginpar{\it{\raggedleft\hspace{0pt}#1}}}
\newlength{\pgmtab}  
\def\qed{ \ \vrule width.2cm height.2cm depth0cm\smallskip}
\def\eps{\epsilon}
\newcommand{\remove}[1]{}
\begin{document}

\mainmatter

\title{Fast Algorithms for Constructing Maximum Entropy Summary Trees}
\titlerunning{Maximum Entropy Summary Trees}
\author{Richard Cole\thanks{
Computer science department, Courant Institute, NYU;
{\tt cole@cs.nyu.edu}.
This research was supported in part by NSF grant CCF-1217989.}
\and
Howard Karloff\thanks{Yahoo Labs, New York, NY;  karloff@yahoo-inc.com.}
\institute{$ $}
}

\maketitle

\sloppy

\begin{abstract}
Karloff and Shirley recently proposed ``summary trees" as a new
way to visualize large rooted trees (Eurovis 2013) and gave
algorithms for generating a maximum-entropy $k$-node summary tree
of an input
$n$-node rooted tree.    However, the algorithm generating
optimal summary trees was only pseudo-polynomial (and worked only
for integral weights);  the authors left open existence of a
polynomial-time algorithm.
In addition,
the authors provided an additive approximation algorithm and a
greedy heuristic, both working on real weights.

This paper shows how to construct maximum entropy $k$-node summary trees
in time $O(k^2 n +n\log n)$ for {\em real} weights (indeed, as small as
the time bound for the greedy heuristic given previously);
how to speed up the
approximation algorithm so that it runs in time
$O(n + (k^4/\epsilon) \log (k/\epsilon))$,
and how to speed up
the greedy algorithm so as to run in time $O(kn+n \log n)$.
Altogether, these results make summary trees a much more
practical tool than before.
\end{abstract}



\section{Introduction}

How should one draw a large $n$-node rooted tree on a small sheet
of paper or computer screen?  Recently, in Eurovis 2013, Karloff and Shirley
\cite{KS} 
proposed a new way to visualize large trees.  While the best
introduction to summary trees appears in \cite{KS}, here we give a
necessarily short description.    A user has an $n$-node
node-weighted tree $T$ and wants to draw a $k$-node summary $S$ of $T$
on a small screen or sheet of paper, $k$ being user-specified.
We begin with an informal, bottom-up, operational description.
Two types of contraction are performed:
subtrees are contracted to single nodes that represent the corresponding subtrees;
similarly multiple sibling subtrees (subtrees whose roots are siblings)
are contracted to single nodes representing them.
The node resulting from the latter contraction is called a group node.
The one constraint is that each node in the summary tree
have at most one child that is a group node.
Examples are shown in Figure~\ref{fig:definition}--\ref{fig:websummary}
below (these figures appeared originally in~\cite{KS}).
\begin{figure}[htb]
\centering
\includegraphics[width=\linewidth]{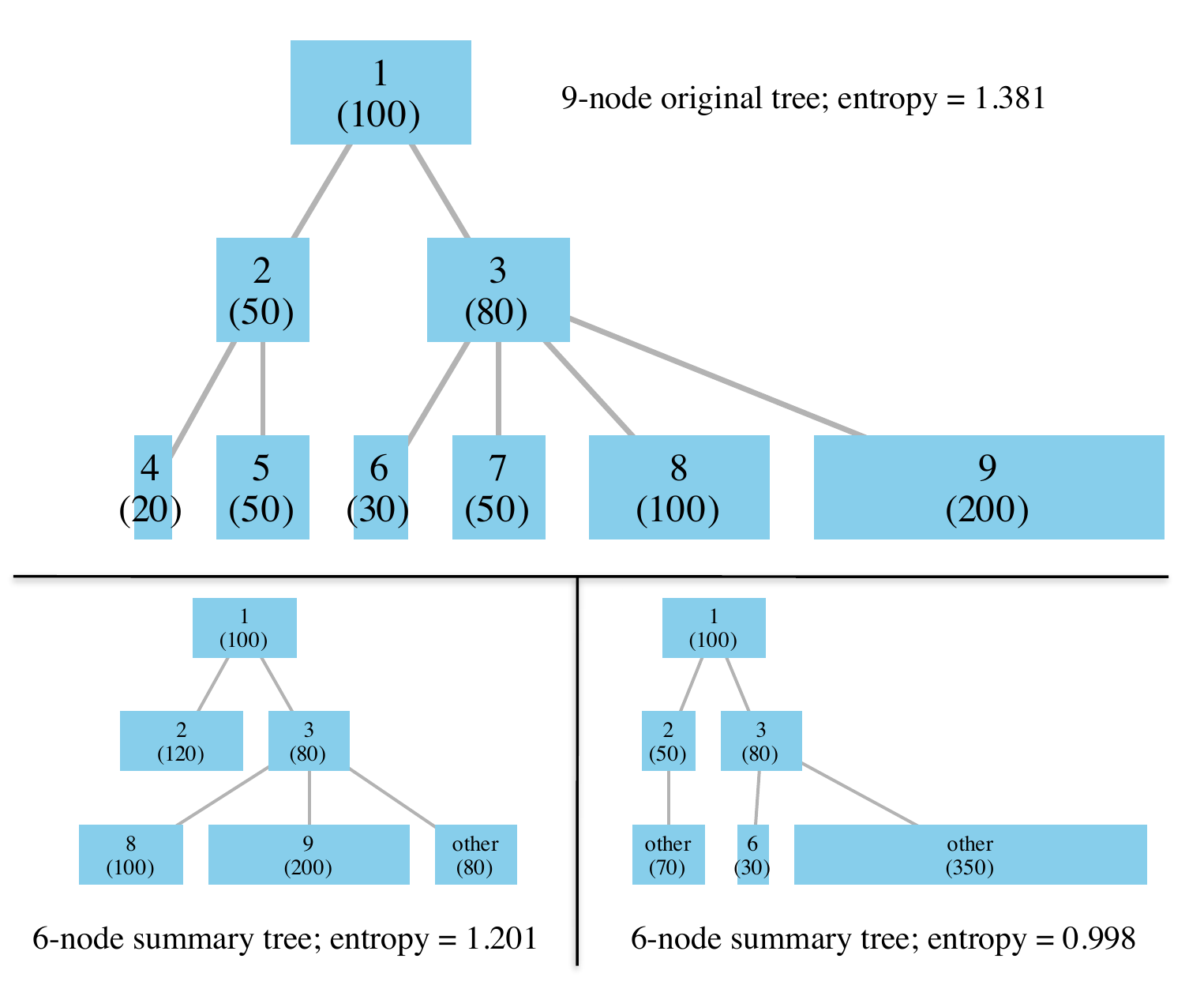}
\caption{\label{fig:definition} In the upper panel, a 9-node tree
(with node weights in
parentheses), and below it, two different 6-node summary trees of
the original 9-node tree.}
\end{figure}

\begin{figure*}[htb]
  \centering
  \mbox{} \hfill
  \includegraphics[width=\linewidth]{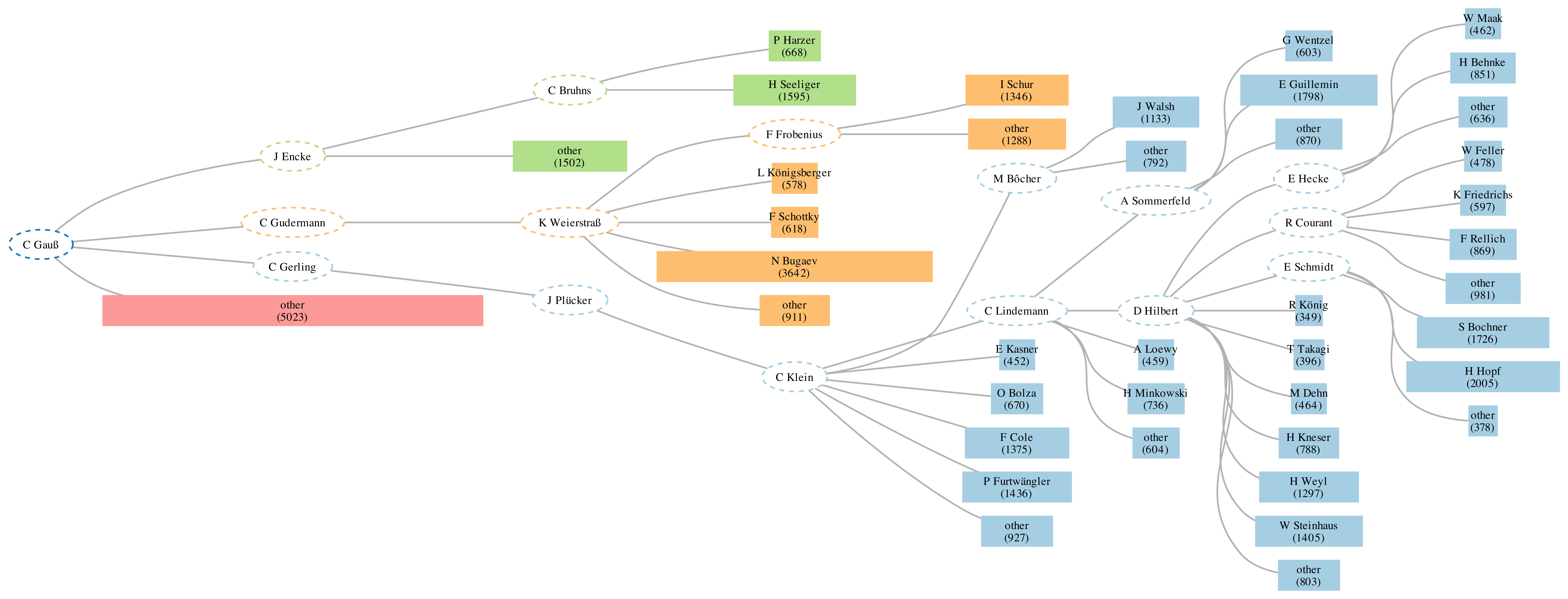}
  \hfill
  \hfill \mbox{}
  \caption{\label{fig:teaser}
  Taken from \cite{KS}, the maximum entropy 56-node summary tree of the math genealogy
tree rooted at Carl Friedrich Gauss, which has 43,527
equal-weighted nodes (where the original advisor-student graph was
forced to be a tree by choosing the primary advisor for
each student who had multiple advisors). Node colors are
determined by their depth-1 ancestor, and node areas are
proportional to their weights in the summary tree.  This tree is
best viewed (and enlarged) on a computer screen.}
\end{figure*}

Next, we give a more formal description.
Let $T_{v}$ denote the subtree of $T$ rooted at $v$.
We name each node of $S$ by the set of nodes of $T$ that it represents.
The following comprise the possible {\em summary trees for $T_{v}$}:
If $T_{v}$ has just one node, the only summary tree is the one node $\{v\}$.
Otherwise, a summary tree for $T_{v}$ is one of:
\begin{enumerate}
\item
a one-node tree $V(T_{v})$ (the set of nodes in $T_{v}$); or
\item
a singleton node $\{v\}$ and summary trees
for the subtrees rooted at the children of $v$ (and edges
from $\{v\}$ to the roots of these summary trees); or
\item
a singleton node $\{v\}$,
a node $other_v$ representing a non-empty subset $U_v$
of $v$'s children and
 all the descendants of the nodes $x \in U_v$,
and for each of $v$'s children $x\not \in U_v$
a summary tree for $T_{x}$ (and edges
from $\{v\}$ to $other_v$ and to the roots of the summary
trees for each $T_{x}$).%
\footnote{$other_v$ sets of size 1 are covered by Cases 2 and 3, but this
redundancy is convenient for the algorithm description.}
Sometimes we will overload the term $other_v$ by using it to denote the subset $U_v$.
\end{enumerate}
We allow arbitrary nonnegative real weights $w_v$ on the nodes $v$ of the input tree $T$.
The weight of a node in a summary tree is defined to be the
sum of the weights of the corresponding nodes in $T$.
Paper~\cite{KS} defined the entropy of a $k$-node summary
tree with nodes of weights $W_1,W_2,...,W_k$ to be
$-\sum_{i=1}^k p_i \lg p_i$, where $p_i=W_i/W$ and $W$ is
the sum of all node weights, the usual
information-theoretic entropy.
Paper~\cite{KS} then
proposed that the most informative summary
trees are those of maximum entropy.
As noted in~\cite{KS}, this is a natural way
to think about the information contained in a node-weighted
tree.  For given a bound on the number of nodes available in
a summary tree, it seems plausible that a best summary tree is one of maximum entropy,
because it is theoretically the most informative.
This provided a principled way to identify the best $k$-node summary tree,
in contrast to more heuristic and operational rules in prior work.

The fact that $other_v$ is an {\em arbitrary} non-empty
subset of $v$'s potentially large set of children is what makes
finding maximum entropy summary trees difficult.  Indeed,
\cite{KS} resorted to using a dynamic program over the node
weights (which worked provided that the weights were
integral) and which led to a final running time of $O(K^2n W)$,
where $W$ is the sum of the node weights and $K$ is the maximum
$k$ for which one is interested in finding a $k$-node summary
tree.  Given $K$, the dynamic program finds maximum entropy
$k$-node summary trees for $k=1,2,\ldots,K$;
from now on we assume that the user specifies $K$ and
$k$-node summary trees are found for all $k\le K$.    The
algorithm
worked well when $W$ was small, but failed to terminate on
two of the five data sets used in \cite{KS}.

The key to obtaining a running time independent of $W$ is to develop
a fuller understanding of the structure of maximum entropy summary trees.
Our new understanding readily yields a truly polynomial-time algorithm.
The main remaining challenge is to create and analyze an effective implementation.
We give an algorithm
running in time $O(K^2 n+n\log n)$~\footnote{Actually,
this can be reduced to $O(K^2 n)$ time by using a combination of fast selection
and sorting instead of sorting alone in various places.};
it generates
maximum entropy summary trees
even for real weights, assuming, of course, a real-arithmetic
model of computation, which is necessary (even for integral
weights) because of the
computation of logarithms.
This result is based on a structural
theorem which shows that the $other$ sets, while allowed to be
arbitrary, can be assumed, without loss of generality, to have a simple structure.

\begin{table}[t]
\centering
\begin{tabular}{| c| |c | c| c|}
\hline
 & Optimal Entropy & Greedy & $\eps$-Approximate \\ [0.5ex]
\hline
Known results & $O(K^2 n W)$~\cite{KS} & $O(K^2 n + n \log n)$~\cite{KS}  & $O(K^2n W_0)$~\cite{KS} \\
\hline
New results & $O(K^2 n+n\log n)$ & $O(Kn + n\log n)$  & $O(n + K^3 W_0 + W_0 \log W_0)$ \\
\hline
\end{tabular}
\vspace*{0.1in}
\caption{Running times of the algorithms; $W_0 = O((K/\eps) \log (K/\eps))$.}
\label{tbl:results}
\end{table}

\begin{figure}[htb]
\centering
\includegraphics[width=\linewidth]{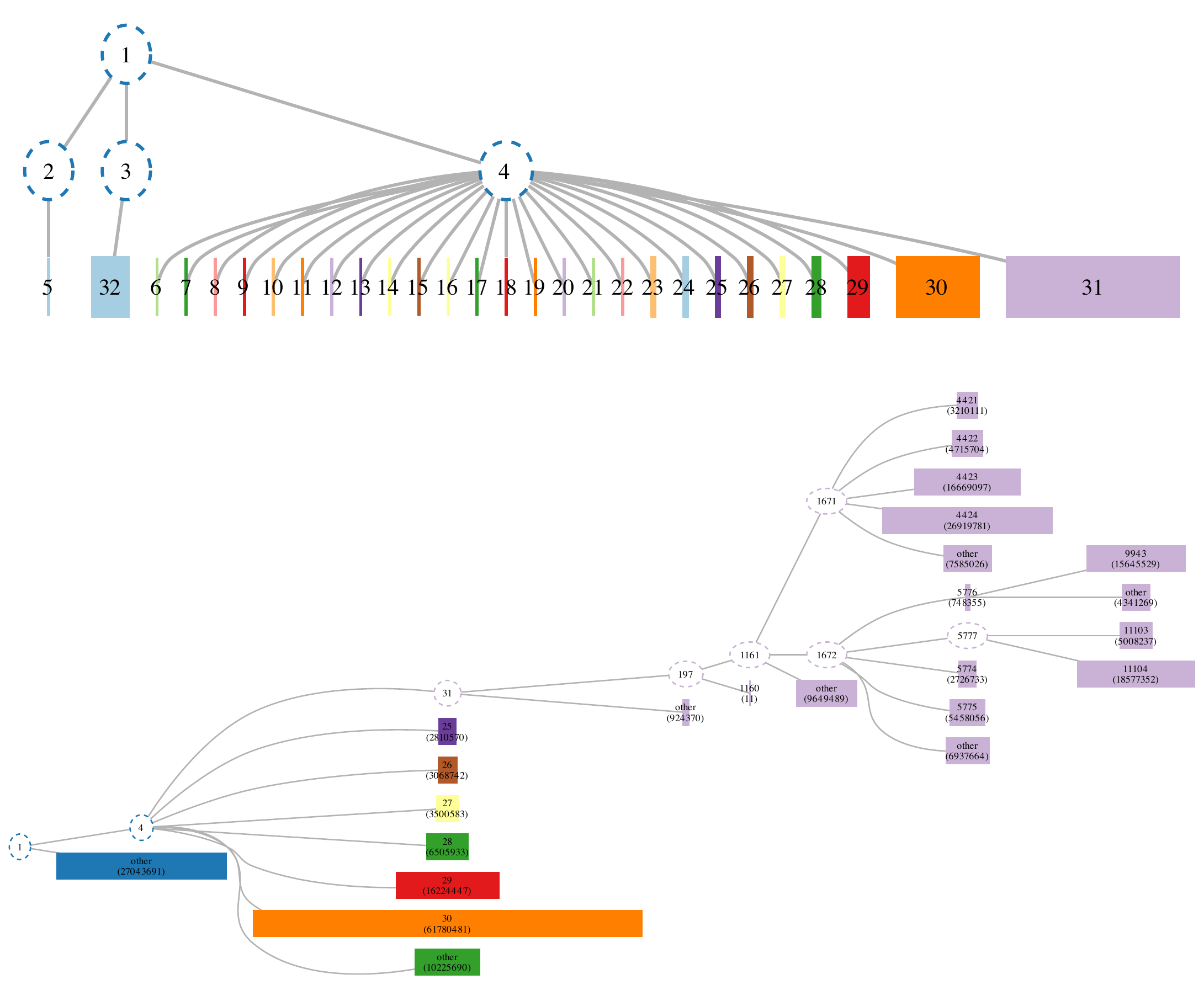}
\caption{\label{fig:websummary} Two summary trees of a
19,335-node web traffic tree. The upper figure is a naive
aggregation to depth 2; the node weights are heavily skewed.
The bottom figure is the maximum entropy $32$-node summary tree,
which displays much more information given the same number of
nodes.}
\end{figure}

To deal with the case of real weights or exceedingly large
integral weights, \cite{KS} gave an algorithm based on scaling,
rounding, and  algorithmic discrepancy theory which builds a
summary tree
whose entropy is within $\epsilon$ {\em additively} of the maximum,
in time $O(K^2n W_0)$, where $W_0 $
is $O((K/\epsilon)\log (K/\epsilon))$.
Keep in mind
here that $K$ is meant to be small, e.g., 100 or 500, while $n$
is meant to go to infinity, and also that $W_0$ is a function
only of $K$ and $\epsilon$ (and of neither $n$ nor $W$).   The key here was to show
that scaling the real input weights to have
sum $W_0$, rounding them using algorithmic discrepancy
theory, and then running the exact dynamic program previously
mentioned on the rounded weights caused a loss of only $\epsilon$ in the final entropy.

This paper shows that the same algorithm can be implemented in
time $O(n+K^3 W_0 + W_0 \log W_0)$;
this is linear time
if $n$ is larger than the other terms.
The key here is to notice that if the sum of integral weights
is $W_0$, which is small, and $n \gg W_0$, then most nodes have
rounded weight 0.  Surely one shouldn't have to devote a lot of
time to nodes of weight 0, and our algorithm, by effectively
replacing $n$ by $O(W_0)$,  exploits this intuition.

Last, \cite{KS} proposed a fast greedy algorithm to generate
summary trees.  Running in time $O(K^2 n+n\log n)$ (though
\cite{KS} overlooked the $n \log n$ time needed for sorting), the
algorithm never took longer than six seconds to run on the data
sets of~\cite{KS}.
This paper shows that a simple modification
to the greedy code,
neither suggested  in~\cite{KS} nor implemented in the associated C
code,
specifically,
not computing a $k$-node summary tree of a tree rooted at a node
having fewer than $k$ descendants, decreases the running time bound
of the greedy algorithm from $O(K^2n + n\log n)$ to $O(Kn+n\log
n)$.  While the modification is trivial, its analysis is not.

Taken together, these new results show that maximum entropy
summary trees are a much more practical tool than was previously
known.

\paragraph{Roadmap.}
Section \ref{background} describes earlier work on visualizing trees.
In Section~\ref{sec:structure} we prove the structural theorem on which our improved  algorithms depend.
This is followed in Section~\ref{sec:exact} with our exact algorithm
and in Section~\ref{sec:key-lemma} with the key lemma for analyzing the exact and greedy algorithms.
Section~\ref{sec:greedy} gives the greedy algorithm
and Section~\ref{sec:approx}, the approximate algorithm and its analysis.


\section{Previous Work}\label{background}

\hide{
\noindent
{\bf Previous work.} Traditionally tree
visualization involved either visualizing the entire tree or
allowing the user to interactively specify tree parts of
interest. Approaches taken include ``Degree-of-interest trees"
\cite{card,heer}, ``hyperbolic browsers'' \cite{lamping}, and the ``accordion drawing
technique" \cite{beerman,munzner}.
``Space-filling'' layouts, e.g., treemaps \cite{shneiderman}, are another popular method.
Paper \cite{vL} is
a recent survey on techniques for drawing large graphs.
Also see \cite{KS} for other relevant previous work.
}

Traditionally tree visualization involved either visualizing the entire tree
or allowing the user to interactively specify in what part of the tree he or she is interested.
Obviously, if one draws a huge tree on a sheet of paper or a computer screen, not only will labels be close-to-impossible
to read, there will be too much information, in that the reader will not know on what part to focus.

Many researchers have attempted to ameliorate the issues involved with drawing a huge tree by allowing
interactivity.  Initially perhaps only the root of the tree is displayed.  When the user clicks on a node,
that node's children then appear.   ``Degree-of-interest trees''~\cite{card,heer} let a user explore a tree
interactively.
Other interactive techniques are ``hyperbolic browsers''~\cite{lamping} and
the ``accordion drawing technique''~\cite{munzner,beerman}.

Researchers have proposed ``space-filling''
layouts as an alternative to traditional node-and-edge layouts.
Treemaps~\cite{shneiderman} are one popular way to lay out large trees.  The root node is represented by
a rectangle, and recursively
the children of a node $v$ are represented by rectangles which together
partition the rectangle representing $v$.
But treemaps are not effective at showing the hierarchy of a tree.

Von Landesberger et al. wrote a recent survey~\cite{vL} on techniques for drawing large graphs.
Other relevant previous work can be found in~\cite{KS}.

\section{Structural Theorem}
\label{sec:structure}
This section proves a structural theorem which implies that
maximum entropy summary trees can be computed in polynomial
time,
in a real-arithmetic model of computation.
We begin by relating our approach to the greedy algorithm from~\cite{KS}.
Let $v$ be a node of
an
input tree and suppose that $\{v\}$ appears in the summary tree.
Recall that $other_v$ denotes the group child of $v$, if any.
\hide{
its one child of the form
$other$, if present.  (Recall that a node $\{v\}$ may have at most one of its children
of the form  $other_v$.  Each remaining child
is either of the form $\{x\}$ for a child $x$ of $v$ in the
input tree or is of the form $T_{x}$ for a child $x$ of $v$ and
represents $V(T_{x})$.
If $other_v$ is present, then $other_v$ represents a
nonempty subset of $v$'s children in $T$ together with their
descendants, i.e.,\ $other_v$
represents the union of $V(T_{x})$ over that subset of
$v$'s children.)
}

\begin{definition}
\begin{enumerate}
\item
The {\em size} $s_v$ of a node $v$ in $T$ is the sum of the weights of its
descendants.
\item
$n_{v}$ denotes the number of descendants of $v$ (including
$v$).
\item
$d_v$ denotes the degree of $v$, the number of children it has.
\item
$\langle v_1,v_2,...,v_{d_v}\rangle $ denotes the children of $v$ when sorted
into nondecreasing order by size.
(Fix one sorted order for each $v$, breaking ties arbitrarily.)
\item
The {\em prefixes}  of $\langle v_1,v_2,...,v_{d_v}\rangle$
are the sequences $\langle v_1,v_2,...,v_i\rangle$ and
sets $\{v_1,v_2,...,v_i\}$ for $i\ge 0$.
\end{enumerate}
\end{definition}

The greedy algorithm in~\cite{KS} sorted and then processed the children of each node in nondecreasing
order by size;
more about this later.
It finds a maximum entropy summary
tree among those
in which for each $v$, either $other_v$ does not exist or is a
nonempty prefix of $\langle
v_1,v_2,...,v_{d_v}\rangle$,
but this need not be the optimal summary tree.
In fact, \cite{KS} gives a 7-node tree $T$
for which the uniquely optimal 4-node summary tree has an $other_v$ node which is not
a prefix of $v$'s children (see Figure~\ref{7node}).
In their example,
the greedy algorithm achieves approximately 1 bit of
entropy, but the optimal summary tree achieves
approximately 1.5
bits
(and $1.5/1.0$ is the worst ratio between greedy and optimum of which we are aware).
This example proves that restricting
$other_v$ to be a prefix of the list of $v$'s children
can lead to summary trees of
suboptimal entropy.
Consequently,
\cite{KS} resorted to a pseudo-polynomial-time dynamic
program in order to find the optimal $other$ sets.

\begin{figure}[htb]
\centering
\includegraphics[width=\linewidth]{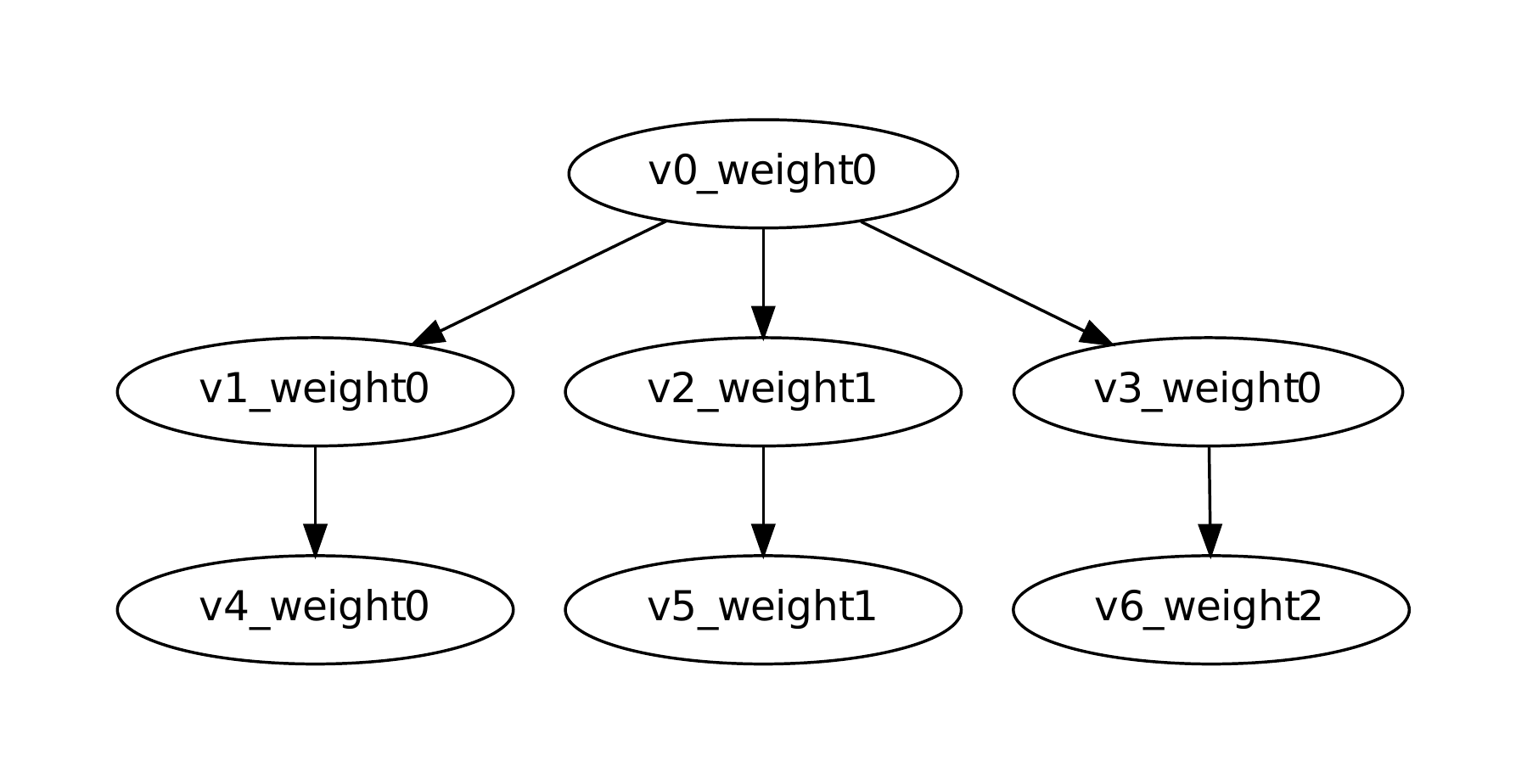}
\caption{\label{7node} A 7-node tree on which the
greedy algorithm does particularly badly.  Imagine that the
weight of node $v_6$ slightly exceeds 2, so that the unique
sorted order of the root's children into nondecreasing order by
size is $\langle v_1,v_2,v_3\rangle$.
The unique $4$-node maximum entropy summary tree has
$other_{v_0}=\{v_1,v_3\}$, which is not a prefix of $\langle
v_1,v_2,v_3\rangle$; this summary tree has entropy 1.5.  By
contrast, greedy gets $other_{v_0}=\{v_1,v_2\}$, in a summary
tree of entropy 1.}
\end{figure}

The definition of summary trees
allows $other_v$ to represent an {\em arbitrary} nonempty subset of
$v$'s children (and all their descendants).  However,
in this paper we prove the surprising fact that, without
loss of generality, in every summary tree {\em of maximum
entropy}, $other_v$ can be assumed to have a special form,
a simple extension of the ``prefix'' form used in the greedy algorithm from~\cite{KS}.

\begin{definition}
The {\em near-prefixes} of $\langle v_1,v_2,\ldots,v_{d_v}\rangle$ are
the sequences $\langle v_1,v_2,\ldots,v_i;v_j\rangle$ and the sets
$\{v_1,v_2,\ldots,v_i;v_j\}$
where $i\ge 0$, $j\ge i+2$, and $j\le d_v$.
$v_j$ is called the non-prefix element.
This terminology is also applied to the sequence
$\langle T_{v_1},T_{v_2},\ldots,T_{v_{d_v}}\rangle$ of trees rooted
at $ v_1,v_2,\ldots,v_{d_v}$, respectively.
\end{definition}

We prove the following structural theorem:
\begin{theorem}\label{structural}
For each $k$, $1 \le k \le n$,
there is a maximum entropy $k$-node summary tree $S$ in which, for every node $v$, $other_v$, when
present, is either a prefix or a near-prefix of $\langle
T_{v_1},T_{v_2},\ldots,T_{v_{d_v}}\rangle$.
\end{theorem}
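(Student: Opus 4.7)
My plan is an exchange argument applied to any maximum-entropy $k$-node summary tree $S$. I will iteratively modify $S$ so that, for each node $v$ at which $other_v$ is present, $U_v$ becomes a prefix or near-prefix of the sorted children $\langle v_1,\ldots,v_{d_v}\rangle$. Fix a node $v$ at which $U := U_v$ is neither; let $i$ be the smallest index with $v_i \notin U$, and let $j_1 < j_2$ be the two smallest indices in $U$ exceeding $i$ (both exist precisely because $U$ is not a near-prefix). Let $m_i \ge 1$ be the number of nodes used for $T_{v_i}$'s summary subtree in $S$.

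The candidate modification removes $v_{j_1}$ from $U$, adds $v_i$, and gives $T_{v_{j_1}}$ a new maximum-entropy $m_i$-node summary subtree (using the budget freed by deleting $T_{v_i}$'s summary). Since the total weight is fixed, comparing entropies reduces to comparing $\sum_u F(w_u)$ where $F(x) = x\log x$; writing $W$ for the pre-swap weight of $other_v$, the change in $\sum F$ is
\[
\Delta \;=\; \bigl[F(W + s_{v_i} - s_{v_{j_1}}) - F(W)\bigr] \;+\; \bigl[\Phi^*_{v_{j_1}}(m_i) - \Phi^*_{v_i}(m_i)\bigr],
\]
where $\Phi^*_u(m)$ denotes the minimum of $\sum F$ over all $m$-node summaries of $T_u$. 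If $\Delta \le 0$, the swap preserves max entropy; moreover, the smallest missing index in $U$ strictly advances, so iterating the modification terminates in at most $d_v$ steps per node.

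In the atomic case $m_i = 1$, I have $\Phi^*_u(1) = F(s_u)$. Using the integral representation $\Delta = \int_0^{s_{v_{j_1}} - s_{v_i}} \bigl[F'(s_{v_i} + t) - F'(W - t)\bigr]\,dt$ together with the monotonicity of $F'(x) = 1 + \log x$, non-positivity of $\Delta$ reduces to the inequality $2 s_{v_{j_1}} \le W + s_{v_i}$; this in turn follows from $W \ge s_{v_{j_1}} + s_{v_{j_2}} \ge 2 s_{v_{j_1}}$, which uses precisely that \emph{both} $v_{j_1}$ and $v_{j_2}$ lie in $U$. The main obstacle I anticipate is the non-atomic case $m_i \ge 2$, where $\Phi^*_{v_{j_1}}(m_i) - \Phi^*_{v_i}(m_i)$ cannot be bounded by the first bracket in a direct way because the two subtrees may differ arbitrarily in shape. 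My plan there is a compound modification: first reduce $m_i$ to $1$ by donating the saved $m_i - 1$ nodes to another existing multi-node summary subtree of $S$ (one must exist by a budget-counting pigeonhole), then apply the atomic-case swap. Verifying that this compound move does not decrease entropy will hinge on the monotonicity and diminishing-returns properties of $\Phi^*_u(m)$ in $m$, together with the same convexity bound on $F$ used above.
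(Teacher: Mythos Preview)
Your atomic case $m_i=1$ is fine and is essentially the paper's first lemma (the paper gets it from a derivative argument in $s_{v_i}$ without needing a second index $j_2$, but your version suffices for your purposes). The genuine gap is the non-atomic case $m_i\ge 2$, and the compound modification you sketch cannot be made to work as stated.

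Here is why. Your Step~1 (collapse the summary of $T_{v_i}$ to one node and donate the freed $m_i-1$ nodes to some other subtree~$T_u$) is itself a $k$-node summary tree, so by optimality of $S$ it can only \emph{weakly decrease} entropy; generically it strictly decreases it. Concretely, writing everything in terms of $\sum F$, optimality of $S$ forces
\[
\bigl[F(s_{v_i})-\Phi^*_{v_i}(m_i)\bigr]\;-\;\bigl[\Phi^*_u(m_u)-\Phi^*_u(m_u+m_i-1)\bigr]\;\ge\;0
\]
for \emph{every} admissible recipient $T_u$, so no choice of $T_u$ avoids a loss. Your Step~2 then contributes only the bounded gain $-\Delta_{\mathrm{swap}}\ge 0$, and there is no reason the Step~1 loss is dominated by $-\Delta_{\mathrm{swap}}$: the former can be as large as the full entropy of the optimal $m_i$-node summary of $T_{v_i}$, which is unrelated to $W,s_{v_i},s_{v_{j_1}}$. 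Neither monotonicity nor concavity (``diminishing returns'') of $m\mapsto -\Phi^*_u(m)$ rescues this, because those properties bound gains from \emph{adding} nodes, not the loss from removing all $m_i-1$ at once from the uniquely best-placed subtree. A secondary issue you would also have to address: your direct swap with budget $m_i$ requires $n_{v_{j_1}}\ge m_i$, which need not hold since the children are ordered by \emph{size}, not by node count.

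The paper sidesteps all of this by never collapsing $T_{v_i}$. Instead it does a one-for-one exchange: pick a single deepest leaf $x$ inside the summary of $T_{v_i}$ and merge it with its parent or a sibling. Because the two merged pieces together weigh at most $s_{v_i}$, the elementary bound $-a\lg a-b\lg b+(a+b)\lg(a+b)\le a+b$ caps the entropy loss at $s_{v_i}$. The freed single node is then spent by splitting $v_{i+1}$ off $other_v$; since a further $v_j$ with $j>i+1$ remains in $other_v$ (this is exactly where the ``not near-prefix'' hypothesis is used), the residual $other_v$ has weight at least $s_{v_{j}}\ge s_{v_{i+1}}$, and a short computation gives an entropy gain of at least $s_{v_{i+1}}\ge s_{v_i}$. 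Termination and tie-breaking are handled by a depth-weighted potential on the $other$ nodes rather than by your per-node progress counter. If you want to salvage your exchange framework, the fix is precisely this: trade \emph{one} node out of $T_{v_i}$'s summary for \emph{one} node split off $other_v$, and use the $a+b$ bound to control the loss; do not attempt to reallocate the whole budget $m_i-1$ elsewhere.
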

\proof
For any summary tree $R$ of an $n$-node tree $T$,
let $M=2n+1$ and define $\Phi(R)=\sum_{v:other_v\mbox{
exists}} M^{n-d_R(v)}\sum_{j: v_j \in other_v} j$,
where $d_R(v)$ denotes the depth in $R$ of the node $other_v$.
Among all maximum entropy summary trees for $T$, let $S$ be
one for which $\Phi(S)$ is minimum.
(The role of $\Phi$ will be to enable tie-breaking among equal-weight summary trees.)


\begin{lemma}
\label{lem:subtrees-as-single-nodes}
Let $v$ be a node of $T$ such that $other_v$ exists in $S$.
If $v_i \notin other_v$ and $v_j \in other_v$, where $i < j$,
then $T_{v_i}$ is represented by two or more nodes in $S$.
\end{lemma}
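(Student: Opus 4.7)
The plan is to assume the conclusion fails and derive a contradiction via a local exchange. Suppose $T_{v_i}$ is represented in $S$ by a single node, call it $V(T_{v_i})$, of weight $a = s_{v_i}$; since $v_j \in other_v$, the subtree $T_{v_j}$ is absorbed into $other_v$ and contributes $b = s_{v_j}$ to its weight $W_U$.

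I would define $S'$ by exchanging the roles of $v_i$ and $v_j$: replace the group set $U_v$ by $U_v' = (U_v \setminus \{v_j\}) \cup \{v_i\}$ and replace the singleton $V(T_{v_i})$ among the children of $\{v\}$ in $S$ by the singleton $V(T_{v_j})$. Since $U_v'$ is nonempty and no other part of the tree is touched, $S'$ is a valid summary tree with the same node count and total weight as $S$.

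The main step is showing that the entropy of $S'$ is at least that of $S$. Only two node weights change: the singleton moves from $a$ to $b$ and $other_v$ moves from $W_U$ to $W_U + a - b$, and their sum $C := a + W_U$ is preserved. Thus the change in entropy reduces to the change in the binary entropy of a split of $C$. Using only $a \le b \le W_U$, I would show that $\{b, W_U + a - b\}$ is at least as balanced (closer to $C/2$) as $\{a, W_U\}$: if $b \le C/2$, then $|b - C/2| \le |a - C/2|$ is immediate from $a \le b$; if $b > C/2$, then $|b - C/2| = b - C/2 \le W_U - C/2 = C/2 - a$, which is just the inequality $b \le W_U$. By concavity of the binary entropy, the entropy of $S'$ is at least that of $S$.

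Finally, $\Phi(S') < \Phi(S)$: the only term of $\Phi$ affected is the one for $other_v$, whose index sum drops by $j - i \ge 1$ while its coefficient $M^{n - d_S(v)}$ is unchanged; no other $other$-node contributes a change, since neither $T_{v_i}$ nor $T_{v_j}$ carries any internal $other$-structure in $S$ or in $S'$. If the entropy strictly increases, this contradicts maximality of $S$, while if it is unchanged, the strict drop in $\Phi$ contradicts the minimality of $\Phi(S)$ among maximum-entropy trees. The main obstacle is the entropy comparison in the case $b > C/2$: one must notice that although the larger and smaller of the two weights appear to swap around $C/2$, the minimum of the pair does not decrease, precisely because $b \le W_U$ forces $C - b \ge a$.
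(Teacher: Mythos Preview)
Your argument is correct and follows the same exchange strategy as the paper: swap $v_i$ into $other_v$ and $v_j$ out, check that entropy does not decrease, and then use the strict drop in $\Phi$ to reach a contradiction. The only difference is cosmetic: the paper verifies the entropy inequality by differentiating with respect to $s_{v_i}$ and observing that the expression decreases to $0$ at $s_{v_i}=s_{v_j}$, whereas you phrase the same fact as ``the new split of $C$ is at least as balanced as the old one, hence has at least as much binary entropy'' using $a\le b\le W_U$.
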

\proof
Suppose, for a contradiction, that $T_{v_i}$ is represented by a single node.
Consider the following alternate summary tree $S'$: $S'$ is obtained
from $S$ by replacing $v_j$ in $other_v$ by $v_i$,
and by representing
$T_{v_j}$ by a single node.  The number of nodes in the summary
tree remains $k$.

Let $s_0$ denote the sum of the sizes of all the children of $v$ in $other_v-\{v_j\}$.
(Here ``$other_v$'' refers to $other_v$ before the change.)
Then $W$ times the increase in entropy in going from $S$ to $S'$ is given by
\[
I =  (s_0 + s_{v_i}) \lg \frac{W}{s_0 + s_{v_i}} + s_{v_j} \lg
\frac{W}{s_{v_j}} - (s_0 + s_{v_j}) \lg \frac{W}{s_0 + s_{v_j}}
- s_{v_i} \lg \frac{W}{s_{v_i}}.
\]
The derivative of this term with respect to $s_{v_i}$ is
$\lg
\frac{s_{v_i}}{s_0 + s_{v_i}} \le 0$.
As $i<j$, $s_{v_i} \le  s_{v_j}$, and thus $I$ is necessarily
nonnegative (for
it declines to 0 at $s_{v_i}=s_{v_j}$);
consequently, there is a nonnegative increase in entropy,
and hence $S'$ is also a maximum entropy summary tree.
Furthermore, if $d$ is the depth of $other_v$ in $S$,
then $\Phi(S')-\Phi(S)\le -(j-i)M^{n-d}<0$,
which contradicts the assumption that $S$ is a maximum
entropy summary tree of minimum $\Phi(S)$. \qed

\begin{lemma}
\label{lem:near-prefix}
Let $v$ be a node in $T$ such that $other_v$ exists in $S$.
If $v_i \notin other_v$ and ${v_{i+1}} \in other_v$, then
${v_j}
\notin other_v$ for all $j > i+1$.
\end{lemma}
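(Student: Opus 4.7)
The plan is to proceed by contradiction via an exchange argument that uses Lemma~\ref{lem:subtrees-as-single-nodes} as the key structural input. Assume $v_i \notin other_v$, $v_{i+1} \in other_v$, and $v_j \in other_v$ for some $j > i+1$. By Lemma~\ref{lem:subtrees-as-single-nodes} applied to the pair $(v_i, v_{i+1})$, the subtree $T_{v_i}$ is represented in $S$ by some summary $R_i$ with $m := |R_i| \ge 2$ nodes. I will construct an alternative $k$-node summary $S'$ with $E(S') \ge E(S)$ but $\Phi(S') < \Phi(S)$, contradicting the choice of $S$.

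I would form $S'$ by exchanging $v_i$ and $v_{i+1}$ relative to $other_v$: remove $v_{i+1}$, add $v_i$ (the set remains nonempty, as $v_j$ stays), absorb $T_{v_i}$ into $other_v$ (so $R_i$ is discarded), and endow $T_{v_{i+1}}$ with a new entropy-maximizing $m$-node summary $R'_{i+1}$ (when $n_{v_{i+1}} \ge m$; the small-subtree case is handled by fully expanding $T_{v_{i+1}}$ and relocating the residual summary nodes elsewhere in $S$, a minor technicality). The total node count is preserved.

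For the $\Phi$ comparison, the inner sum at $other_v$ loses the index $i+1$ and gains $i$, so the contribution of $other_v$ to $\Phi$ drops by $M^{n-d}$, where $d := d_S(other_v)$. All other changes in $\Phi$ lie strictly inside $R_i$ or $R'_{i+1}$, and hence at depth at least $d+1$; the aggregate magnitude of these is bounded by a polynomial in $n$ times $M^{n-d-1}$, which for $M$ chosen suitably large is dominated by $-M^{n-d}$. For the entropy comparison, the change decomposes as (a) a concavity-based contribution at $other_v$, moving from $(s_0 + s_{v_{i+1}})\lg\frac{W}{s_0 + s_{v_{i+1}}}$ to $(s_0 + s_{v_i})\lg\frac{W}{s_0 + s_{v_i}}$, where $s_0$ is the total weight of $other_v \setminus \{v_{i+1}\}$ (positive because $v_j$ remains); together with (b) the summary difference $W\cdot E(R'_{i+1}) - W\cdot E(R_i)$.

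The main obstacle is combining (a) and (b) to show the overall entropy change is nonnegative. By analogy with Lemma~\ref{lem:subtrees-as-single-nodes}, the natural tool is a transplantation argument: transfer the cut pattern of $R_i$ onto $T_{v_{i+1}}$ to produce an explicit $m$-node summary whose node-weight multiset weakly majorizes that of $R_i$ (leveraging $s_{v_{i+1}} \ge s_{v_i}$), so that by the concavity of $f(x)=x\lg(W/x)$ its entropy is at least $W\cdot E(R_i)$; since $R'_{i+1}$ is entropy-maximizing, this will suffice once paired with (a). Formalizing this transplantation across arbitrary differences of shape and out-degree between $T_{v_i}$ and $T_{v_{i+1}}$ is the delicate step of the argument, but the template is dictated by the previous lemma's inequality $I \ge 0$.
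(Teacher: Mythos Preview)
Your transplantation step is not merely delicate; it is false in general, and this is the essential gap. The only relation the hypotheses give you between $T_{v_i}$ and $T_{v_{i+1}}$ is $s_{v_{i+1}} \ge s_{v_i}$; their internal shapes are completely unrelated, so there is no reason the best $m$-node summary of $T_{v_{i+1}}$ should be as balanced as $R_i$. Concretely: take $s_{v_i}=10$ with $T_{v_i}$ a weight-$5$ root and a weight-$5$ child, so $R_i$ has parts $\{5,5\}$; take $s_{v_{i+1}}=12$ with $T_{v_{i+1}}$ a weight-$11.9$ root and a weight-$0.1$ child, so the \emph{only} $2$-node summary $R'_{i+1}$ has parts $\{11.9,0.1\}$; let $s_0=s_{v_j}=15$. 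Then your swap turns the partition $\{27,5,5\}$ into $\{25,11.9,0.1\}$, and a direct computation of $-\sum x\lg x$ shows the entropy strictly \emph{decreases}. No majorization inequality rescues this, because the achievable weight multisets for summaries of $T_{v_{i+1}}$ are constrained by its tree structure, not just by its total size. Your ``small-subtree technicality'' is the same obstruction in sharper form: a single heavy node $v_{i+1}$ admits only the trivial $1$-node summary, and there is no canonical place to relocate $m-1$ freed nodes while guaranteeing an entropy gain.

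The paper avoids this entirely by moving only \emph{one} unit of the node budget rather than all $m$. It picks a deepest leaf $x$ of $R_i$ and merges it with its parent or a sibling; Lemma~\ref{easy} bounds the resulting entropy loss (times $W$) by the size of the merged node, hence by $s_{v_i}$. Then it splits $v_{i+1}$ out of $other_v$ as a \emph{single} node $T_{v_{i+1}}$ --- not an $m$-node summary --- and the presence of $v_j$ with $s_{v_j}\ge s_{v_{i+1}}$ in $other_v$ forces the entropy gain (times $W$) to be at least $s_{v_{i+1}}\ge s_{v_i}$. The $\Phi$ bookkeeping is then localized: removing index $i{+}1$ at depth $d$ beats the at-most-$2n$ index change at depth $\ge d{+}1$ coming from a single merge, which is exactly what $M=2n+1$ is calibrated for. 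Your wholesale swap, by contrast, also perturbs $\Phi$ by the aggregate of all $other$-indices inside $R_i$ and $R'_{i+1}$, which can be of order $n^2$ at depth $d{+}1$ and is not dominated by $M^{n-d}$ when $M=2n+1$.
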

\proof
Suppose, for a contradiction, that $v_j \in other_v$, for some $j > i+1$.

By Lemma~\ref{lem:subtrees-as-single-nodes},
$T_{v_i}$ is represented by two or more nodes in $S$.  Hence
$\{v_i\}$ appears as a node in the summary tree, and $\{v_i\}$  has one or more
children in $S$.  In $S$, let
$x$ be a descendant of $\{v_i\}$ of maximum depth in $S$.
Node $x$ is a proper descendant of $\{v_i\}$.

We will show now that combining node $x$ with another node
in a specified way yields a summary tree of $T_{v_i}$ with one
fewer node and having entropy at most $s_{v_i}$ smaller.
Node $x$ is not $\{v_i\}$.
Let $y$ be $x$'s parent in $S$.
Node $y=\{u\}$ for some node $u$ in $T$ (since every nonleaf in
a summary tree represents a single node of $T$).
There are four cases to analyze, but before turning to them,
we state the following simple
lemma  which we will need; it can be proven by calculus.
\begin{lemma}\label{easy}
If $a,b\ge 0$, $-a\lg a -b\lg b+(a+b)\lg (a+b) \le a+b.$
\end{lemma}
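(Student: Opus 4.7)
The plan is to reduce this two-variable inequality to a one-dimensional optimization by fixing the sum. Set $s = a+b$ and, treating $a$ as the variable, define $f(t) = -t\lg t - (s-t)\lg(s-t) + s\lg s$ for $t\in[0,s]$, using the standard convention $0\lg 0 = 0$. The inequality to prove becomes $f(t) \le s$ for all $t\in[0,s]$.

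I would then proceed in three short calculus steps. First, the endpoints: $f(0) = f(s) = s\lg s - s\lg s = 0$, so any maximum of $f$ on $[0,s]$ is either $0$ or attained in the interior. Second, differentiate: $f'(t) = -\lg t - \tfrac{1}{\ln 2} + \lg(s-t) + \tfrac{1}{\ln 2} = \lg\!\bigl((s-t)/t\bigr)$, whose unique zero in $(0,s)$ is $t = s/2$. Third, evaluate at this critical point: $f(s/2) = -s\lg(s/2) + s\lg s = s\lg 2 = s$. Since $f$ is continuous on $[0,s]$ and has exactly one interior critical point, this is the global maximum, giving $f(t)\le s$, i.e., the claimed bound.

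A slicker, more conceptual alternative is to factor out $a+b$: a direct manipulation shows $-a\lg a - b\lg b + (a+b)\lg(a+b) = (a+b)\,H\!\bigl(a/(a+b)\bigr)$, where $H(p) = -p\lg p - (1-p)\lg(1-p)$ is the binary entropy. Since $H(p)\le 1$ on $[0,1]$, the inequality follows at once. Either proof is routine; there is no real obstacle. The only mild care is the boundary case $a=0$ or $b=0$, which both approaches handle cleanly via the convention $0\lg 0 = 0$.
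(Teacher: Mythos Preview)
Your proof is correct. The paper does not actually supply a proof of this lemma, stating only that ``it can be proven by calculus,'' and your first argument carries out precisely such a calculus verification; your second observation, that the left-hand side equals $(a+b)\,H\bigl(a/(a+b)\bigr)$ with $H(p)\le 1$, is a clean alternative the paper does not mention.
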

Let $s_x$, for a node $x$ in summary tree $S$,
denote the sum of the weights of all the nodes of $T$ represented
by $x$.  (For a node of the form $other_v$, we mean the sum of
the {\em sizes} of all the children of $v$ in $other_v$, or equivalently, the sum of
the weights of all their descendants.) 

Now we begin the case analysis.  Let $d$ be the depth in $S$ of node $\{v_i\}$.

\begin{enumerate}
\item
$y$'s only child in $S$ is $x$.

We combine nodes $x$ and $y = \{u\}$ into a node $z$
representing $T_{u}$.
Recall that $w_u$ denotes $u$'s weight. Then
$W$ times the entropy decrease equals
\begin{align*}
s_x\lg (W/s_x) & +w_u\lg (W/w_u) -(s_x+w_u)\lg (W/(s_x+w_u))\\
 =  &  ~~~ -s_x\lg s_x -w_u\lg w_u +(s_x+w_u)\lg (s_x+w_u)\\
 \le  & ~~~ s_x+w_u ~~~\text{(by Lemma~\ref{easy})}
~~~ =  ~ s_z ~ \le ~ s_{v_i}.
\end{align*}

This change leaves $\Phi$ unchanged.

\item
$x$ has a sibling in $S$ and $other_u$ does not exist.

Hence $x$ is either $\{\alpha\}$ or $T_{\alpha}$ for some node
$\alpha\in T$.

We create a new $other_u$ node by combining
$x$ with an arbitrary sibling $x'$ of $x$.  Because
$x$ is of maximum depth in $S$, $x'$ is either of the form
$\{\beta\}$ (node $\beta$ in $T$ has no children) or $T_{\beta}$, for some $\beta$ in
$T$.
The resulting entropy decrease equals
\begin{align*}
 s_x\lg (W/s_x) & +s_{x'}\lg (W/s_{x'})-(s_x+s_{x'}) \lg (W/(s_x+s_{x'})) \\
 = & ~~~ -s_x\lg s_x-s_{x'}\lg s_{x'}+(s_x+s_{x'})\lg (s_x+s_{x'})\\
\le  & ~~~ s_x+s_{x'} ~~~\text{(by Lemma~\ref{easy})}
~~~ \le ~ s_{v_i}.
\end{align*}

This change can increase $\Phi$ by at most $2n\cdot
M^{n-(d+1)}$,
because the depth of the new $other_u$ node is at least $d+1$.

\item
$x$ has a sibling in $S$ and $\{x\}=other_u$.

We choose an arbitrary sibling $x'$ of $x$ and add it to
$other_u$.  The entropy calculation is the same as for Case 2.
This change can increase $\Phi$ by at most $n\cdot
M^{n-(d+1)}$, where $d$ is the depth of $\{v_i\}$ in $S$.

\item
$x$ has a sibling in $S$, $other_u$ exists, and and $\{x\} \neq other_u$.

We add $x$ to $other_u$.  Let $x'$ be the node $other_u$.
The calculations are exactly the same as in Case 3.
\end{enumerate}

In all four cases, the decrease in entropy is at most $s_{v_i}$
and the increase in $\Phi$ is at most $2nM^{n-d-1}$.

Now we show how to generate a new maximum entropy summary tree
$S'$. To get $S'$,
combine $x$ as above with either its parent or a sibling,
thereby decreasing the number of summary tree nodes by one, and
then split off $v_{i+1}$ from $other_v$ and create a
node to represent $T_{v_{i+1}}$, thereby increasing the number
of summary tree nodes back to $k$.
Now, let $s_0$ denote the sum of the sizes of all the children of $v$ in
$other_v-\{v_{i+1},v_j\}$.
$W$ times the increase in entropy from this
two-part change to $S$ is at least
\begin{align*}
& \left [(s_0 + s_{v_j}) \lg \frac{1}{s_0 + s_{v_j}} +
s_{v_{i+1}} \lg \frac{1}{s_{v_{i+1}}} - (s_0 +
s_{v_{i+1}} + s_{v_j}) \lg \frac{1}{s_0 + s_{v_{i+1}} +
s_{v_j}}\right ]-s_{v_i} \\
&= (s_0 + s_{v_j}) \lg \frac{s_0 + s_{v_{i+1}} + s_{v_j}}{s_0 +
s_{v_j}} + s_{v_{i+1}} \lg \frac{s_0 + s_{v_{i+1}} +
s_{v_j}}{s_{v_{i+1}}} - s_{v_i}
\ge s_{v_{i+1}} - s_{v_i} \ge 0.
\end{align*}
(The first inequality
follows because $s_{v_j} \ge s_{v_{i+1}}$,
which implies that $(s_0+s_{v_{i+1}}+s_{v_j})/s_{v_{i+1}}\ge
2$.)
But this is a nonnegative increase in entropy, proving that $S'$
is a maximum entropy summary tree.

Splitting off $v_{i+1}$ from $other_v$ decreases $\Phi$ by at
least $M^{n-d}$, because the depth of the $other_v$ node equals
the depth of node $v_i$, which is $d$.
Hence the total $\Delta \Phi $ is at most $
-M^{n-d}+2n\cdot M^{n-d-1}=-M^{n-d} (1-2n/M)<0$,
a contradiction to the fact that $S$ is a maximum entropy
summary tree of minimum $\Phi$. \qed

This completes the proof of Theorem \ref{structural}. \qed

\begin{theorem}\label{othersize}
For all $v$, if $other_v$ exists, then $|other_v|\ge d_v-K+2$.
\end{theorem}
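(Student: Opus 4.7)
The plan is to prove this by a direct counting argument on the nodes forced to appear in any $K$-node summary tree at which $other_v$ is present. The key observation is that by Case~3 of the definition of summary trees, whenever $other_v$ exists at a node $v$, the summary tree $S$ must contain $\{v\}$ as a node (since $other_v$ is by definition a child of $\{v\}$), it must contain the node $other_v$ itself, and for every child $x$ of $v$ with $x \notin U_v$, it must contain a summary tree of $T_x$.

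First I would observe that there are exactly $d_v - |other_v|$ children of $v$ that are not in $other_v$, and that every summary tree of a nonempty subtree contains at least one node. Thus, just from the portion of $S$ rooted at $\{v\}$, we account for at least
\[
1 + 1 + (d_v - |other_v|) = d_v - |other_v| + 2
\]
distinct nodes of $S$. Since $S$ has at most $K$ nodes in total, this yields $K \ge d_v - |other_v| + 2$, and rearranging gives $|other_v| \ge d_v - K + 2$, as required.

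I do not anticipate a substantive obstacle: the statement is a pure counting fact about the definition of summary trees, and it holds for every summary tree (not only maximum-entropy ones). The point of the theorem is algorithmic — it says that $other_v$ must miss at most $K-2$ of $v$'s children, so the candidate near-prefixes and prefixes provided by Theorem~\ref{structural} can be enumerated compactly, which is what the dynamic program in Section~\ref{sec:exact} will exploit.
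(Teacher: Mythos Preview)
Your proof is correct and is essentially the same counting argument as the paper's: each child of $v$ not in $other_v$ contributes at least one node to the summary tree, and together with the nodes $\{v\}$ and $other_v$ this forces $d_v - |other_v| + 2 \le K$. If anything, your version is slightly more explicit than the paper's in naming both $\{v\}$ and $other_v$ among the two extra nodes.
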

\proof
Each child of $v$ 
not in $other_v$ contributes at least one node
to the final summary tree, which has order $k\le K$, and hence
the number of children not in $other_v$ cannot exceed $K-2$
(for 
one node is needed to represent $\{v\}$).  \qed

\section{The Exact Algorithm}
\label{sec:exact}

Relabel the nodes as $1,2,...,n$, with the root being node 1,
the nodes at depth $d$ getting consecutive labels, and the
children of a node being labeled with increasing consecutive labels in nondecreasing size order.
(This can be done by processing the nodes in nondecreasing order
by depth, with
all the children of node $v$ processed consecutively in
nondecreasing order by size.)
This relabeling costs $O(n\log  n)$ time,%
\footnote{In fact,
the relative order, at node $v$, of its $d_v - K +1$
smallest-sized children
does not matter since they must all be included in $other_v$. This allows us to perform just a partial
sort at each node, in which the
$d_v - K +1$ smallest-size children are identified by selection
and then the remaining at most $K-1$ children are sorted.
This improves the $O(n\log n)$ term to $O(n\log K)$
which is dominated by $O(nK)$.}
because $\sum_v (d_v \log d_v)\le \sum_v (d_v\log n)\le n \log n$.

The description and the implementation of the algorithm are simplified if we compute
what we call the``pseudo-entropy,''
of summary trees for $T_{v}$ rather than their entropy.
The {\em pseudo-entropy p-ent$(S_v)$} of a tree $S_v$ with nodes of weights $W_1, W_2, \ldots, W_k$ is simply
$-\sum p_i \log p_i$, where $p_i = W_i /W$ and $W$ is the weight of $T$ (and \emph{not} of $T_{v}$).
Clearly, if $S_v$ is part of a summary tree $S$ for $T$, then $S_v$ contributes $-\sum p_i \log p_i$
to the entropy of $S$.
Let ent$(S_v)$ denote the entropy of tree $S_v$.
Then
\begin{eqnarray*}
\text{ent}(S_v) & = & - \sum_i \frac{W_i} {W_v} \log \frac {W_i} {W_v} ~ = ~
- \left[ \frac {W}{W_v} \sum_i \frac{W_i}{W} \log \frac{W_i}{W} + \sum_i {W_i}{W} \log \frac{W}{W_v} \right]\\
& = & - \frac {W}{W_v} \text{p-ent}(S_v) - \log \frac{W}{W_v}.
\end{eqnarray*}
Thus the same tree optimizes the entropy and the pseudo-entropy.
\hide{
Note that the proofs of Lemmas~\ref{lem:subtrees-as-single-nodes} and~\ref{lem:near-prefix}
apply unchanged to the pseudo-entropy and consequently the constraints on the nodes $other_v$
carry over when optimizing pseudo-entropy.
Henceforth, when we refer to an optimal summary tree, we intend a tree
optimizing the pseudo-entropy.
In fact, these trees also optimize the entropy,
a fact whose verification we
leave to the interested reader as it not needed to demonstrate the
of our algorithm.}

We will be using a dynamic programming algorithm.
To simplify the presentation we will only describe how to compute
the maximum pseudo-entropy for a $k$-node summary tree for $T_{v}$, for each node $v$
and for all $k$, $1 \le k \le \min\{K, n_{v}\}$.

The algorithm will first seek to find the value of
the pseudo-entropy for optimal $k$-node summary trees
when $other_v$ is restricted to being a prefix set,
and then when $other_v$ is restricted to being a near-prefix set
containing $v_j$ as its non-prefix element,
for each possible $v_j$ in turn, i.e.,\ for
$\max\{3,d_v - K + 3\} \le j \le d_v$.
Thus the algorithm will consider
$\min\{d_v-1, K-1\}$
$\min\{d_v, K-1\}$
classes of candidate $other_v$ sets.

To describe the algorithm it will be helpful to introduce the notion of
a summary forest.  A $k$-node {\em summary forest for $T_{v}$} is a $(k+1)$-node summary
tree for $T_{v}$ from which $v$ has been excised
(leaving a forest).
We will also call this a summary forest for $T_{v_1}, T_{v_2}, \ldots, T_{v_{d_v}}$.
A summary forest for $T_{v_1}, T_{v_2}, \ldots, T_{v_l}$ is defined analogously, for $1 \le l \le d_v$.

To find the pseudo-entropy-optimal $k$-node summary trees for $T_{v}$, for $1\le k \le K$, we first find
the pseudo-entropy of
optimal $k$-node summary forests for $T_{v_1},T_{v_2}, \ldots, T_{v_l}$, for $\max\{1,d_v - K+2\} \le l \le d_v$.
The optimal $k$-node summary trees for $T_{v}$ are then obtained by
attaching $\{v\}$
as a root node to the trees in the optimal $(k-1)$-node summary forests for $T_{v_1},T_{v_2}, \ldots,
T_{v_{d_v}}$.

Now we explain how to find these optimal summary forests.
In turn, we consider each of the up-to-$\max\{1, K-1\}$ possible classes of $other_v$ nodes:
the prefix $other_v$ nodes, and for each $j$ with
$\max\{3, d_v - K+3\} \le j \le d_v$,
the class of near-prefix $other_v$
nodes including $v_j$ as the non-prefix element.

First, we describe the handling of the candidate prefix $other_v$ nodes.
We start with optimal $k$-node summary trees for $T_{v_1}$, for $1 \le k \le K-1$.
Inductively, suppose that we have computed
(the entropy of) optimal $k$-node summary forests for $T_{v_1}, \dots, T_{v_l}$.
We find optimal $k$-node summary forests for $T_{v_1}, \dots, T_{v_l}, T_{v_{l+1}}$ as follows.
For $k=1$, the forest comprises a single $other_v$ node.
For each $k>1$,
we choose the highest entropy among the following options:
an optimal $h$-node summary forest for $T_{v_1}, \dots, T_{v_l}$ plus an
optimal $(k-h)$-node summary tree for $T_{v_{l+1}}$, for $1\le h < k$.

The correctness of this procedure is immediate: for $k=1$ clearly the only summary forest
is a one-node forest. For $k>1$, $T_{v_{l+1}}$ cannot be represented by the
$other_v$ node (since we are discussing the handling of the {\em
prefix} $other_v$ nodes) and so it must be represented by one
tree in the summary forest; this implies that $T_{v_1}, T_{v_2}, \ldots, T_{v_l}$ must also
be represented by one or more trees in the summary forest.
Of course, the representation of each of the parts must be optimal.
Our algorithm considers all possible ways of partitioning the nodes in the summary forest
among these two parts; consequently it finds an optimal forest.

The process when $v_j$ is the non-prefix node in $other_v$ is
essentially identical.
There are two changes: (i) $other_v$ is initialized to contain $T_{v_j}$ (rather than being the empty set) and
(ii) the incremental sweep skips tree $T_{v_j}$.
The correctness argument is as in the previous paragraph.

Finally, to obtain optimal $k$-node summary forests for $T_{v_1},T_{v_2}, \ldots, T_{v_{d_v}}$ one simply
takes the best among the $k$-node forests computed for the different classes of candidate $other_v$ nodes.
Again, correctness is immediate.

\begin{theorem}\label{time}
The running time of the algorithm is $O(K^2 n +n\log n)$.
\end{theorem}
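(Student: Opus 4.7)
I would split the running time into (i) the one-time relabeling/sorting and (ii) the total dynamic-programming work, and bound each separately. For (i), sorting each node $v$'s children by size costs $O(d_v \log d_v)$; summing gives $\sum_v O(d_v \log d_v) \le O((\sum_v d_v)\log n) = O(n \log n)$ via $\sum_v d_v = n-1$, accounting for the $n\log n$ term. (The footnote's partial-sort refinement further reduces this to $O(n \log K)$, which is absorbed by $O(nK)$.)

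For (ii), at each node $v$ the algorithm runs one sweep DP per candidate class of $other_v$: the prefix class, plus (by Theorem~\ref{othersize}) at most $K-2$ near-prefix classes, for a total of $\min(d_v, K-1)$ classes at $v$. A sweep consists of $O(d_v)$ incremental steps; step $l$ is a max-plus convolution of two length-$\le K$ tables in which $T_{v_l}[\cdot]$ has only $K_l := \min(K, n_{v_l})$ valid entries, so with range-restricted updates it costs $O(K \cdot K_l)$, and one sweep costs $O(K \cdot S_v)$ where $S_v := \sum_l K_l$.

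The key aggregation uses the identity $\sum_v S_v = \sum_{u \ne \text{root}} \min(K, n_u) \le K(n-1)$, since each non-root node $u$ contributes $\min(K, n_u) \le K$ to exactly one $S_v$ (that of its parent). The prefix sweeps across all nodes therefore cost $\sum_v O(K S_v) = O(K^2 n)$. To keep the near-prefix sweeps within the same budget, I would invoke Theorem~\ref{othersize} a second time: since $|other_v| \ge d_v - K + 2$ always, only the last $O(K)$ children are not forced into every non-empty $other_v$, so in each near-prefix class the ``nontrivial'' portion of the sweep touches only $O(K)$ children; the forced-inclusion cumulative weights are computed once at $v$ in $O(d_v)$ time and reused across all classes. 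After amortization this keeps the per-$v$ near-prefix cost at $O(K S_v)$, matching the prefix case and bringing the combined DP total to $O(K^2 n)$.

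\textbf{Main obstacle.} The delicate step is formalizing the sharing across near-prefix classes: naively, $O(K)$ classes per $v$ each of cost $O(K S_v)$ would sum to $O(K^3 n)$. I plan to precompute at $v$ both the prefix DP table $F^{\text{pre}}_l[k]$ and a complementary right-to-left DP $R_l[k]$ over separately-represented suffixes (each at cost $O(K S_v)$), and then, for each class with non-prefix element $v_j$, assemble the optimal $k$-node near-prefix forest by an $O(K)$ max-plus combination of an appropriate entry of $F^{\text{pre}}$, an entry of $R$, and the contribution of $v_j$ to $other_v$. Proving this decomposition sound (the weight adjustment of $v_j$ joining $other_v$ combines cleanly with the two precomputed tables) and confirming that the $\min(d_v, K)$ classes at $v$ collectively cost $O(K S_v)$ rather than $O(K^2 S_v)$ is the main technical hurdle; combined with (i) it yields the claimed $O(K^2 n + n \log n)$ bound.
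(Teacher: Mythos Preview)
Your treatment of (i) and of the prefix sweeps in (ii) is fine, and the aggregation $\sum_v S_v=\sum_{u\ne\text{root}}\min(K,n_u)\le K(n-1)$ is a clean way to get $O(K^2 n)$ for the prefix sweeps alone. The gap is in the near-prefix sweeps.

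The proposed $F^{\text{pre}}/R$ decomposition does not work as stated. For the class with non-prefix element $v_j$, the node $other_v$ is $\{v_1,\ldots,v_i,v_j\}$ for some $i\le j-2$, and its pseudo-entropy contribution is $-\frac{P_i+s_{v_j}}{W}\lg\frac{P_i+s_{v_j}}{W}$, where $P_i=s_{v_1}+\cdots+s_{v_i}$. This term is not of the form $f(i)+g(j)$, so it cannot be split between a left table and a right table. Your $F^{\text{pre}}_{j-1}[\cdot]$ already has the entropy of $\{v_1,\ldots,v_i\}$ baked in (for whichever $i$ happened to be optimal at each $k$), and there is no way to adjust that post hoc to the entropy of $\{v_1,\ldots,v_i,v_j\}$ without knowing $i$. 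In addition, the ``middle'' children $v_{i+1},\ldots,v_{j-1}$, each separately summarized, are captured neither by a prefix table that already contains an $other_v$ node nor by a suffix table starting at $v_{j+1}$; max-plus admits no prefix-minus-suffix trick to recover a range. So the $O(K)$-per-class combination you sketch is not sound, and the obstacle you flag is real rather than merely technical.

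The paper sidesteps this entirely: it does \emph{not} share work across classes. Instead it sharpens the single-sweep cost. Rather than bounding step $l$ by $K\cdot K_l$, it uses the finer product $\min\{n_{v_1}+\cdots+n_{v_l},K\}\cdot\min\{n_{v_{l+1}},K\}$ and then proves a tree-wide structural bound (Lemma~\ref{greedythm} and Corollary~\ref{arbperm}):
\[
\sum_{\text{non-leaf }v}\ \sum_{l=1}^{d_v-1}\min\{n_{v_1}+\cdots+n_{v_l},K\}\cdot\min\{n_{v_{l+1}},K\}\ \le\ 2Kn.
\]
Hence one full sweep over the tree costs $O(Kn)$, not $O(K^2 n)$; running all $\le K-1$ classes naively therefore already costs $O(K^2 n)$. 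The missing ingredient in your plan is precisely this lemma (proved by induction on the height of $v$, with a short case analysis on whether the accumulated and incoming subtree sizes exceed $K$). With it, no cross-class sharing is needed and the claimed $O(K^2 n+n\log n)$ bound follows directly.
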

\noindent
{\sc Note}.   Our time bound is $O(K^2n+n \log n)$ to build $K$ maximum-entropy
summary trees, or $O(Kn+(n\log n)/K)$ amortized time for each.  There is an obvious
lower bound of $\Omega (n+K^2)$ to build all $K$ trees, since one has to read an
$n$-node tree and produce trees having $1,2,3,\ldots,K$ nodes.
Hence there cannot be a $O(n)$-time algorithm that generates all
$K$ trees,
since it would violate the lower bound when $K$ is
$\omega(\sqrt{n})$.
Of course, conceivably there is a linear-time algorithm
to build a maximum-entropy $k$-node
summary tree for a single value of $k$.
\proof
The running time is the sum of three terms: \\
(1) $O(n \log n)$,  for sorting the children of all nodes by size.
\\
(2) $O(K n)$ for initializations.  In fact, the initializations
for node $v$ take time $O(K\cdot \min\{d_v,K-1\} )$, which is $O(Kn)$ time in total.
\\
(3) For each node $v$, the cost of processing node $v_l$
when processing each of the classes of candidate $other_v$ nodes.
Let $\langle v_a,v_{a+1},\ldots,v_{v_d}\rangle $ be the sequence of nodes processed
when considering the candidate prefix $other_v$ sets
(nodes $v_1, \ldots, v_{a-1}$ are the nodes guaranteed to be in $other_v$).
When processing the near-prefix candidate $other_v$ sets with non-prefix element $v_j$,
the same sequence will be processed except that $v_j$ will be omitted.
For the class of prefix candidate sets, the cost for processing $v_{l+1}$, for $a \le l < v_d$, is
$\min\{K-1, n_{v_a} + n_{v_{a+1}} + \cdots + n_{v_l}\}\cdot \min\{K-1, n_{v_{l+1}}\}
\le \min\{K-1, n_{v_1} + n_{v_2} + \cdots + n_{v_l}\}\cdot \min\{K-1, n_{v_{l+1}}\}$,
for we are seeking $k$-node summary forests for $1 \le k \le K-1$, and the number of nodes
in a summary tree cannot be more than the number of nodes available in the relevant subtrees of $T$.
The same bound applies for each of the remaining classes of candidate $other_v$ sets and
there are at most $K-1$ of these classes.
Since the number of child nodes being processed when computing at node $v$ is $d_v -a +1 \le d_v$,
the obvious upper bound here
is $O(K^3\cdot d_v)$.
Summed over all $v$, this totals $O(K^3 \cdot n)$.
However, Corollary~\ref{arbperm} below shows that
$\sum_{\text{non-leaf } v}\sum_l \min\{ n_{v_1}+ n_{v_2}+\cdots+
n_{v_l},K\} \cdot
\min\{n_{{v_{l+1}}},K\}\le 2Kn$, giving an overall
time of $O(n\log n+K^2 n)$.
\qed

\section{A Lemma For Running Time Analysis}
\label{sec:key-lemma}

In this section we state  a lemma underlying the running time
analysis of both the greedy algorithm and the exact algorithm.
Let $n$ be a positive integer and let $T$ be a rooted, $n$-node tree, and
\emph{for this section only},
let $v_1,v_2,...,v_{d_v}$ be $v$'s children in {\em any} order.

\begin{definition}
\label{def:cost-v}
Relative to $T$, let $cost(v)$ be defined for all $v\in T$ as
follows.
If $v$ is a leaf, $cost(v)=0$.
If $v$ is not a leaf,
$cost(v)=[\sum_{i=1}^{d_v} cost(v_i)]+
[\sum_{i=1}^{d_v-1} \min\{n_{v_1}+n_{v_2}+\cdots + n_{v_i},K\}\cdot \min\{n_{{v_{i+1}}},K\}].$
\end{definition}

\begin{lemma} \label{greedythm}
For all $v$, $cost(v)\le n_{v}^2$ if $n_{v}\le K$, and
$cost(v)\le 2Kn_{v}-K^2$, if $n_{v}> K$.
\end{lemma}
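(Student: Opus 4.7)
My plan is to prove the unified bound $cost(v) \le g(n_v)$ by induction on $n_v$, where $g\colon [0,\infty)\to[0,\infty)$ is defined by $g(x) = x^2$ for $0\le x\le K$ and $g(x) = 2Kx - K^2$ for $x>K$. This single function encodes both cases of the lemma: $g(n_v)$ equals $n_v^2$ for $n_v\le K$ and $2Kn_v - K^2$ for $n_v>K$. Note that $g$ is continuous, nondecreasing, and convex; in fact $g$ is the antiderivative of $2\min(t,K)$ on $[0,\infty)$.

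The base case ($v$ a leaf, $n_v=1$) is immediate since $cost(v)=0\le 1=g(1)$. For the inductive step at a non-leaf $v$, the inductive hypothesis applied to each child gives $cost(v_i)\le g(n_{v_i})$. Because $g$ is non-decreasing and $N_{d_v}=n_v-1$, it suffices to establish the purely arithmetic inequality
\[
\sum_{i=1}^{d_v} g(n_{v_i}) + \sum_{i=1}^{d_v-1} \min\{N_i,K\}\,\min\{n_{v_{i+1}},K\} \;\le\; g(N_{d_v}).
\]
I derive this by iterating the following \emph{pairwise merge inequality}: for all $a,b\ge 0$,
\[
g(a) + g(b) + \min\{a,K\}\,\min\{b,K\} \;\le\; g(a+b).
\]
Given the merge inequality, the summed bound follows by induction on $d_v$: having processed $v_1,\dots,v_i$ into an aggregated block of total size $N_i$, one applies the merge inequality with $a=N_i$ and $b=n_{v_{i+1}}$, absorbing $g(n_{v_{i+1}})$ and the new cross-term into the increase of $g$ from $g(N_i)$ to $g(N_{i+1})$.

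The main obstacle is proving the pairwise merge inequality, which I handle by case analysis on whether $a$, $b$, and $a+b$ exceed $K$. Three cases are nearly immediate: if $a+b\le K$ it reduces to $ab\le 2ab$; if $a,b\ge K$ the inequality holds with equality (so it is tight in that regime); if exactly one of $a,b$ exceeds $K$ (say $b>K\ge a$), it reduces to $a(2K-a)\ge aK$, which holds since $a\le K$. The delicate case is $a,b\le K$ with $a+b>K$, where the inequality is equivalent to $h(a,b):=a^2+ab+b^2-2K(a+b)+K^2 \le 0$ on the region $\{(a,b): 0\le a,b\le K,\ a+b\ge K\}$. I verify this by checking the three boundary segments ($a=K$ gives $h=b(b-K)\le 0$; $b=K$ is symmetric; $a+b=K$ gives $h=a(a-K)\le 0$) together with the unique interior critical point $a=b=2K/3$, at which $h=-K^2/3<0$; since $h$ is continuous on a compact set and nonpositive at every candidate extremum, $h\le 0$ throughout.
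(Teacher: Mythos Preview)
Your proof is correct and follows essentially the same route as the paper's: both reduce to the one-step merge inequality $g(a)+g(b)+\min\{a,K\}\min\{b,K\}\le g(a+b)$ (the paper phrases this as an inner induction on the partial sums $Cost_j$) and verify it by the same case split on which of $a$, $b$, $a+b$ exceed $K$. The only cosmetic difference is in the boundary case $a,b\le K<a+b$: the paper substitutes $\Delta=a+b-K$ and factors the residual as $\Delta(\Delta-t_j)+t_j(t_j-K)\le 0$, whereas you maximize $h$ over the compact triangle by checking its three boundary segments and the unique interior critical point---both arguments are equally valid and elementary.
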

\proof
We prove the lemma by
induction on the height  of $v$ (i.e., the maximum length of a path from $v$
down to a leaf).

\noindent
Basis.  If $v$ has height 0, i.e., $v$ is a leaf, then
$cost(v)=0$, whereas $n_{v}=1\le K$ and indeed $cost(v)=0\le 1= n_{v}^2$.

\noindent
Inductive step. Let $h\ge 0$ and assume that the statement is true for all nodes
of height at most $h$.

To simplify the notation, we will use $n_i$ to denote $n_{v_i}$
and $d$ to denote $d_v$ from now on.

Let $v$ be a node of height $h+1$;  then
$v$'s children have height at most $h$.  Therefore, by
induction, if $v$'s children are $v_1,v_2,...,v_{d}$, then
$cost(v_i)\le n_{i}^2$ if $n_{i}\le K$, and
$cost(v_i)\le 2Kn_{i}-K^2$, if $n_{i}>K$.

Let $Cost_j =\sum_{1 \le i \le j cost v_i}  +[\sum_{i=1}^{j-1} \min\{n_{1}+n_{2}+\cdots +
n_{i},K\}\cdot \min\{n_{{i+1}},K\}]$.
We will show by induction on $j$ that if $\sum_{1 \le i \le j} n_i \le K$,
then $Cost_j \le (\sum _{1 \le i \le j} n_i)^2$,
and otherwise $Cost_j \le [2K\cdot \sum _{1 \le i \le j} n_i] - K^2$,
from which the result in the lemma is immediate.

Let $t_j = \sum_{i=1}^j n_i$ and let $u_j = n_{j+1}$.
There are five cases to consider.

\smallskip
\noindent
i. $t_j + u_j \le K$. Then
\[
Cost_j ~\le~ t_j^2 + u_j^2 + t_j u_j ~\le~ (t_j + u_j)^2 ~=~ ( \sum_{i=1}^{j+1} n_i)^2.
\]

\smallskip
\noindent
ii. $t_j > K$ and $u_j  \le K$. Then
\begin{eqnarray*}
Cost_j & \le & (2K t_j -K^2) + u_j^2 + Ku_j ~\le~ 2K t_j -K^2 + 2Ku_j \\
& \le & 2K(t_j + u_j) - K^2 ~=~ [2K \sum_{i=1}^{j+1} n_i] -K^2.
\end{eqnarray*}

\smallskip
\noindent
iii. $t_j \le K$ and $u_j > K$.
\\
This has essentially the same analysis as Case ii.

\smallskip
\noindent
iv. $t_j > K$ and $u_j >K$. Then
\begin{eqnarray*}
Cost_j & \le & (2K t_j -K^2) + (2K u_j - K^2) + K^2 \\
& \le & 2K(t_j + u_j) - K^2 ~=~ [2K \sum_{i=1}^{j+1} n_i] -K^2.
\end{eqnarray*}

\smallskip
\noindent
v. $t_j \le K$, $t_j + u_j > K$, and $u_j \le K$.\\
Let $\Delta = t_j + u_j - K$. Then
\begin{eqnarray*}
Cost_j & \le & t_j^2 + u_j^2 + t_j u_j = t_j^2 + (K + \Delta - t_j)^2 + t_j(K + \Delta - t_j) \\
& \le & t_j^2 - Kt_j - \Delta t_j + K^2 + \Delta^2 + 2K \Delta \\
&=& 2(\Delta + K) K - K^2 + \Delta^2  - \Delta t_j + t_j^2 - Kt_j\\
& \le & 2(\Delta + K) K - K^2 ~\le ~ 2K\sum_{i=1}^{j+1} n_i - K^2.
\end{eqnarray*}
The next-to-last inequality follows because
 $\Delta \le t_j$
and
$t_j \le K$.
\qed

\begin{corollary} \label{arbperm}
For $K \ge 1$,
$\sum_{\text{non-leaf}~v} [\sum_{i=1}^{d_v-1}
\min\{n_{v_1}+n_{v_2}+\cdots +
n_{v_i},K\}\cdot \min\{n_{{v_{i+1}}},K\}]\le 2Kn.$
\end{corollary}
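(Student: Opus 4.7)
The plan is to observe that the double sum in the corollary is nothing other than $cost(r)$ evaluated at the root $r$ of $T$, after which the bound follows immediately from Lemma~\ref{greedythm}. Specifically, Definition~\ref{def:cost-v} gives a recursive decomposition in which, for every non-leaf $v$, the ``local'' contribution at $v$ is exactly
\[
L(v) \;=\; \sum_{i=1}^{d_v-1} \min\{n_{v_1}+n_{v_2}+\cdots+n_{v_i},K\}\cdot \min\{n_{v_{i+1}},K\},
\]
and $cost(v) = \sum_{i=1}^{d_v} cost(v_i) + L(v)$, with $cost = 0$ at leaves. Unrolling this recursion from the root downward telescopes the first term away, so
\[
cost(r) \;=\; \sum_{\text{non-leaf } v} L(v),
\]
which is precisely the left-hand side of the corollary.

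Next I would apply Lemma~\ref{greedythm} to $v = r$, using $n_r = n$. The lemma gives two cases: if $n \le K$, then $cost(r) \le n^2 \le K n \le 2Kn$; and if $n > K$, then $cost(r) \le 2Kn - K^2 \le 2Kn$. Either way, $cost(r) \le 2Kn$, which is exactly the inequality claimed.

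There is no real obstacle here — the only conceptual step is recognizing that the sum in the corollary is precisely $cost(r)$ as defined in Definition~\ref{def:cost-v}. Once that identification is made, the corollary is an immediate consequence of Lemma~\ref{greedythm} applied at the root. The inductive work has already been carried out inside Lemma~\ref{greedythm}, and no further case analysis is needed.
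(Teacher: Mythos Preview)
Your proof is correct and follows exactly the same approach as the paper: identify the double sum as $cost(r)$ by unrolling the recursion in Definition~\ref{def:cost-v}, then invoke Lemma~\ref{greedythm} at the root. Your write-up is in fact more detailed than the paper's, explicitly handling both cases $n \le K$ and $n > K$, whereas the paper simply cites the lemma.
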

\proof
Summing over all non-leaf nodes in Definition~\ref{def:cost-v} yields that
the term we are bounding equals $cost(r)$, where $r$ is the root of the tree;
the result now follows from Lemma~\ref{greedythm}.
\qed

\section{Greedy Algorithm}
\label{sec:greedy}

The greedy algorithm proposed in \cite{KS} is precisely
the algorithm proposed herein for the exact solution but with the {\em other} sets
restricted to being prefix sets.
\remove{
Algorithm \ref{opt}, when $P_v$ is replaced by the singleton
$\{\langle 1,2,...,d_v\rangle\}$.  (Recall that the children of node
$v$ were already sorted into nondecreasing order by size.)
}
In~\cite{KS} Greedy was shown to run in time $O(K^2 n+n
\log n)$.  Here, we shave off a factor of $K$ from the first
term.

\begin{corollary}\label{greedytime} (of Lemma~\ref{greedythm}).
The time needed by the greedy algorithm to generate
summary trees of orders $k=1,2,\ldots,K$ is $O(Kn+n\log n)$.
\end{corollary}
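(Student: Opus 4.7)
The plan is to observe that the greedy algorithm is simply the exact algorithm of Section~\ref{sec:exact} restricted to a single class of candidate $other_v$ sets, namely the prefix sets, so the analysis of Theorem~\ref{time} carries through with one factor of $K$ removed. The stated modification---not computing a $k$-node summary tree of a subtree rooted at a node with fewer than $k$ descendants---ensures that when we combine an $h$-node summary forest for $T_{v_1},\ldots,T_{v_l}$ with an $(k-h)$-node summary tree for $T_{v_{l+1}}$, the ranges of $h$ are capped by $\min\{K{-}1, n_{v_1}+\cdots+n_{v_l}\}$ and $\min\{K{-}1, n_{v_{l+1}}\}$, respectively. These are precisely the two $\min$ factors appearing in Corollary~\ref{arbperm}.

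First I would decompose the running time into the same three buckets used in the proof of Theorem~\ref{time}: (1) sorting the children of every node by size, which contributes $O(n\log n)$; (2) initialization, which under the greedy restriction costs $O(K)$ per node, totalling $O(Kn)$; and (3) the incremental sweep that builds, for each non-leaf $v$, the optimal $k$-node summary forests for the prefixes $T_{v_1},\ldots,T_{v_l}$ as $l$ grows from $1$ to $d_v$. For part~(3), the work done when extending from $l$ to $l{+}1$ at node $v$ is $\min\{K,\,n_{v_1}+\cdots+n_{v_l}\}\cdot\min\{K,\,n_{v_{l+1}}\}$, exactly because of the modification; there is no outer loop over $K{-}1$ classes as in the exact algorithm.

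Summing the part~(3) contribution over all non-leaf $v$ and all $l$ is then an immediate application of Corollary~\ref{arbperm} (which in turn rests on Lemma~\ref{greedythm}), yielding a total of at most $2Kn$. Adding the three buckets gives $O(Kn + n\log n)$, as claimed.

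The only non-routine step is verifying that the ``cap'' obtained from the modification is indeed $\min\{K, n_{v_{l+1}}\}$ rather than just $K$; this is where the analysis of~\cite{KS} loses a factor of $K$, because without the cap the second factor is replaced by $K$ and Corollary~\ref{arbperm} no longer applies in the form we need. Once this cap is in place, the bound follows immediately from the lemma proved in Section~\ref{sec:key-lemma}, so no further calculation is required.
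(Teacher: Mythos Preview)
Your proposal is correct and follows essentially the same approach as the paper: decompose the running time into sorting ($O(n\log n)$), initialization ($O(Kn)$), and the incremental sweep, then bound the sweep by the double sum $\sum_v\sum_i \min\{n_{v_1}+\cdots+n_{v_i},K\}\cdot\min\{n_{v_{i+1}},K\}$ and apply Corollary~\ref{arbperm} to get $2Kn$. Your additional remark explaining why the ``cap'' modification is what rescues the second $\min$ factor (and hence the applicability of Corollary~\ref{arbperm}) is a helpful elaboration but does not diverge from the paper's argument.
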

\proof
Aside from
initializations (which take time $O(Kn)$) and sorting (which takes time $O(n \log n)$), the time needed by the greedy
algorithm is $O$ of
$$\sum_v [\sum_{i=1}^{d_v-1} \min\{n_{v_1}+n_{v_2}+\cdots +
n_{v_i},K\}\cdot \min\{n_{v_{i+1}},K\}],$$ which by Corollary
\ref{arbperm} is at most $2Kn$, giving an overall bound of $O(Kn+n\log n)$.
\qed

Again, one can reduce the $O(n \log n)$ term to $O(n \log K)$, giving an overall run time of $O(nK)$.
Here we rely on the fact that the greedy algorithm will
necessarily put $v_1,v_2,...,v_{d_v-K}$ into $other_v$.  To save
time, we can modify Greedy so as to put those nodes into
$other_v$ and only individually process children $v_{d_v-K+1},
v_{d_v-K+2}, ..., v_{d_v}$.   We can find the $d_v-K$ children
of $v$ of least size via a selection (not sorting) algorithm and
then sort only the remaining $K$ children.  This makes the total
sorting time over all nodes $v$ $O$ of
$\sum_v \min\{d_v,K\}\log (\min\{d_v,K\})\le \sum_v
\min\{d_v,K\}\log K\le (\log K) \sum_v d_v=n\log K$.

\section{Improved Approximation Algorithm}
\label{sec:approx}

In this section we describe an algorithm that computes an approximately entropy-optimal $k$-node summary tree.
Our algorithm relies on the following outline from \cite{KS}:
\begin{enumerate}
\item
One can rescale the weights in a tree
to make them sum up to any positive integral value $W_0$, while
leaving the entropy of any summary tree unchanged.  (This is
obvious.)
\item
One can use algorithmic discrepancy theory to round each
resulting real node weight $w_v$ to value $w'_v$ equal to either
$\lfloor w_v \rfloor$ or $1+\lfloor w_v \rfloor$ such that for
each node $v\in T$,
$|\sum_{u \in T_{v}} w'_u
-\sum_{u \in T_{v}} w_u|\le 1$ {\em for all $v$ simultaneously},
without changing the overall sum.
\item
Using Naudts's theorem \cite{naudts}
that almost identical probability distributions have almost
identical entropy, one can prove, for some
integer $W_0$ which is $O((K/\epsilon) \log (K/\epsilon))$,
that if one finds
a maximum entropy summary tree $T^*$
for the modified weights $(w'_v)$,
then $T^*$ has entropy (measured according to the {\em original} weights
$w_v$)
at most $\epsilon$ less than that of the truly maximum
entropy summary tree.
\end{enumerate}

Suppose that the weights on $T$ are integral and sum to $W_0$.
Clearly the number of nodes of positive weight cannot exceed
$W_0$; however, the 0-weight nodes could far outnumber the
positive-weight nodes.  Indeed, that is exactly what happens if
$n \gg W_0$.

Our algorithm exploits the fact that little processing is needed
for most of the 0-weight nodes.
In fact, we will need to compute summary trees for only
the non-zero weight nodes and for at most
$2(W_0 - 1)$ 0-weight nodes.

The algorithm works with a tree $T'$, a reduced version of $T$ in which
some 0-weight nodes have been removed.
The following notation will be helpful.
$F_T(v,k)$ denotes the maximum pseudo-entropy of a
$k$-node summary tree of $T_{v}$, where $T_{v}$ is a subtree of tree $T$;
similarly, $F_{T'}(v,k)$ denotes the maximum pseudo-entropy of a
$k$-node summary tree of $T'_v$, where $T'_v$ is a subtree of tree $T'$.

$T'$ is obtained from $T$ as follows: for each positively-sized node $v$ in $T$,
if $v$ has one or more size-0 children, remove them and their
descendants and replace them all by
a single 0-weight child.
Clearly optimal summary trees in $T'$ form optimal summary trees in $T$
(for the only difference in summarizing $T$ is that we could add
0-weight nodes no longer present in $T'$, and these would contribute
0 to the entropy).
Note that if $v$ is a 0-weight non-leaf node in $T'$ then it
must have non-zero size (assuming $T$ has at least one positive-weight node).
The following result is immediate.

\begin{lemma}
Let $T$ have $n$ nodes and $T'$ have $n'$ nodes.
Let $v$ be a node in $T'$ with
$n(v)$ descendants in $T$ and $n'(v)$ descendants in $T'$.
Then $F_T(v,k)=F_{T'} (v,k)$ for $1\le k\le n'(v)$.
For $n'(v)+1\le k\le n$, $F_T(v,k)=F_{T'} (v,n'(v))$.
\end{lemma}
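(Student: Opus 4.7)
The plan is to exhibit a pseudo-entropy-preserving correspondence between summary trees of $T_v$ and summary trees of $T'_v$, exploiting the fact that every node of $T$ absent from $T'$ has weight $0$ and hence contributes $0$ to pseudo-entropy.

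First I would show $F_{T'}(v,k) \le F_T(v,k)$ for $1 \le k \le n'(v)$. Given a $k$-node summary tree $S$ of $T'_v$, I build a $k$-node summary tree $\tilde{S}$ of $T_v$ with the same pseudo-entropy: at each positively-sized node $u$, the single $0$-weight placeholder child of $u$ in $T'$ stands in for the whole collection of $u$'s size-$0$ children (and their descendants) in $T$. Wherever the placeholder appears in $S$---as its own $1$-node subtree summary, inside an $other_u$, or within a single-node summary of $T'_u$---the corresponding piece of $\tilde{S}$ is obtained by either forming a new $other_u$, or merging into an existing $other_u$, or expanding a single-node summary. If the node count would drop (e.g.\ the placeholder was a stand-alone summary while $other_u$ already exists), one $0$-weight subtree of $u$ in $T$ is split off as its own singleton summary, at zero additional entropy cost.

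Second I would show $F_T(v,k) \le F_{T'}(v,\min(k,n'(v)))$. Given a $k$-node summary tree $S$ of $T_v$, $\tilde{S}$ is obtained by collapsing, at each positively-sized $u$, all of $S$'s $0$-weight summary substructure among $u$'s size-$0$ descendants into the role of the single placeholder---as the placeholder's own $1$-node summary, or by absorption into a pre-existing $other_u$. Each such merge combines summary nodes of weight $0$, so pseudo-entropy is unchanged, and the resulting $\tilde{S}$ is a valid summary tree of $T'_v$ with at most $\min(k,n'(v))$ nodes. Combining this with the monotonicity of $F_{T'}(v,\cdot)$ on $[1,n'(v)]$ (and of $F_T(v,\cdot)$ on $[1,n(v)]$)---which follows from the standard observation that refining a summary node by splitting never decreases entropy, since $z\log z$ is convex---yields both claimed equalities (for $k > n'(v)$ one then uses $F_T(v,k) \ge F_T(v,n'(v)) = F_{T'}(v,n'(v))$ together with the just-proved upper bound).

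The main obstacle is the bookkeeping in these consolidations and expansions while respecting the at-most-one-$other$-per-node rule and matching target node counts; this turns out to be routine precisely because every modification touches only $0$-weight material and so preserves pseudo-entropy exactly, making a short case analysis in the spirit of (but simpler than) Lemmas~\ref{lem:subtrees-as-single-nodes} and~\ref{lem:near-prefix} sufficient.
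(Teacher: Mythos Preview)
Your argument is correct and follows the same underlying idea as the paper: since every node removed in passing from $T$ to $T'$ has weight $0$, its presence or absence in a summary tree does not affect pseudo-entropy, so optimal summary trees correspond between $T_v$ and $T'_v$. The paper in fact offers no formal proof of this lemma---it simply declares the result ``immediate'' after the one-sentence observation that ``the only difference in summarizing $T$ is that we could add $0$-weight nodes no longer present in $T'$, and these would contribute $0$ to the entropy.'' Your two-direction correspondence, together with the monotonicity observation, is a faithful (and considerably more careful) fleshing-out of exactly that remark; nothing in your approach diverges from the paper's intended reasoning.
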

Note that $F_{T'} (v,n'(v))$ is attained by a partition of the
set of $v$'s children in
$T'$ into singletons.

(Now of course we {\em have} changed the problem, since
$T'$ might have fewer than $K$ nodes.  However,
if this happens, then optimal summary trees of $T$ having more
than $|T'|$ nodes have no more entropy than optimal summary trees of $T$
having exactly $|T'|$ nodes.)

Even after the reduction it may be the case that $|T'| \gg W_0$,
for $T'$ might still contain long paths of 0-weight nodes in which
each node has only one positively-sized child.
However, the following lemmas show that they add little to the cost of
computing optimal summary trees.

\begin{lemma}
\label{lem:0-wt-1-subtree}
Let $v$ be a 0-weight node in $T'$ with a single child $u$.
Then for $2\le k \le |T'_v|$,
$F_{T'}(v,k+1) = F_{T'}(u,k)$;
also $F_{T'}(v,1) = F_{T'}(u,1)$.
\end{lemma}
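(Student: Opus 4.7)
The plan is a short case analysis based on the three options in the definition of summary tree, exploiting the two special features of $v$: it has weight zero (so the node $\{v\}$ contributes $0$ to pseudo-entropy), and it has a single child $u$ (so Option~3 can only produce a two-node summary tree of $T'_v$).

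First I would dispose of the base case $F_{T'}(v,1)=F_{T'}(u,1)$. A one-node summary tree of $T'_v$ is the single node $V(T'_v)$ of weight $w(T'_v)=w(T'_u)$ (since $w_v=0$), and a one-node summary tree of $T'_u$ is the single node $V(T'_u)$ of the same weight; the single probability $W_1/W$ is therefore identical, so the pseudo-entropies coincide.

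Next, fix $k$ with $2\le k\le |T'_v|$ and prove the two inequalities separately. For the direction $F_{T'}(v,k+1)\ge F_{T'}(u,k)$, take any optimal $k$-node summary tree $S_u$ of $T'_u$ and attach $\{v\}$ as a new root (Option~2). Because $v$ has weight $0$, the additional node contributes $0$ to the pseudo-entropy, so the resulting $(k+1)$-node summary tree of $T'_v$ has pseudo-entropy exactly $F_{T'}(u,k)$. (Only a minor boundary remark is needed for $k=|T'_v|$, where both sides reduce to the all-singletons tree under the standard convention $F_{T'}(\cdot,k)=F_{T'}(\cdot,|T'_\cdot|)$ once $k$ exceeds the number of nodes.) For the direction $F_{T'}(v,k+1)\le F_{T'}(u,k)$, take any $(k+1)$-node summary tree $S$ of $T'_v$ and examine which of the three options at $v$ it uses. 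Option~1 gives only one node, impossible since $k+1\ge 3$. Option~3 forces $other_v=\{u\}$ (since $u$ is $v$'s only child), producing exactly two nodes, again impossible for $k+1\ge 3$. Hence $S$ must arise from Option~2, i.e.,\ $S=\{v\}\cup S_u$ where $S_u$ is a $k$-node summary tree of $T'_u$. Since $\{v\}$ has weight $0$, the pseudo-entropy of $S$ equals that of $S_u$, which is at most $F_{T'}(u,k)$. Combining the two inequalities yields the claim.

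The only real obstacle is the bookkeeping at the boundary $k=|T'_v|$, where Option~2 uses every available node and one must check that $F_{T'}(v,k+1)$ is interpreted as $F_{T'}(v,|T'_v|)$ (all-singletons tree); the equality with $F_{T'}(u,k)=F_{T'}(u,|T'_u|)$ still follows because the extra root $\{v\}$ has weight $0$ and hence does not change the probability distribution of positive weights. Everything else is an immediate consequence of the option structure and the fact that a weight-zero node contributes nothing to pseudo-entropy.
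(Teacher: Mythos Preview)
Your argument is correct and follows the same line as the paper's own proof: a $(k+1)$-node summary tree of $T'_v$ is obtained from a $k$-node summary tree of $T'_u$ by adjoining the weight-zero root $\{v\}$, and the $k=1$ case matches because the single node has weight $s_u$ in both trees. The paper states this in two sentences without your explicit option-by-option case analysis, so your version is simply a more detailed rendering of the identical idea. Your worry about the endpoint $k=|T'_v|$ is reasonable; the intended range is effectively $k\le |T'_u|=|T'_v|-1$, and the convention you invoke is consistent with the paper's earlier remark that $F_T(v,k)=F_{T'}(v,n'(v))$ once $k$ exceeds the number of available nodes.
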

\begin{proof}
For $k\ge 2$, the $(k+1)$-node summary tree for $T'_v$ adds a
zero-weight node $\{v\}$ to the
$k$-node summary tree for $T'_u$.
For $k=1$ both trees have a single node of weight $w_u$.
\end{proof}

\begin{lemma}
\label{lem:0-wt-2-subtree}
Let $v$ be a 0-weight node in $T'$ with exactly two children, a 0-weight leaf $v_1$
and a child $u$ of positive size.
Then for $3\le k \le |T'_v|$,
$F_{T'}(v,k+2) = F_{T'}(u,k)$;
also $F_{T'}(v,2) = F_{T'}(v,1) = F_{T'}(u,1)$.
\end{lemma}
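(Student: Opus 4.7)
The plan is to establish each equality by a direction-by-direction argument, mirroring the proof of Lemma~\ref{lem:0-wt-1-subtree} and exploiting that $w_v = w_{v_1} = 0$, so the nodes $\{v\}$ and $\{v_1\}$ each contribute $0$ to the pseudo-entropy of any summary tree in which they appear.

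For the lower bound $F_{T'}(v, k+2) \ge F_{T'}(u, k)$, I would take an optimal $k$-node summary tree $S_u$ of $T'_u$ and form a $(k+2)$-node summary tree of $T'_v$ whose root $\{v\}$ has two children: the weight-$0$ leaf $\{v_1\}$ and the root of $S_u$. The two added nodes contribute nothing to the pseudo-entropy, so the new tree achieves entropy exactly $F_{T'}(u, k)$.

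For the upper bound $F_{T'}(v, k+2) \le F_{T'}(u, k)$ when $k \ge 3$, I would proceed by case analysis on the top-level structure of any $(k+2)$-node summary tree $S_v$ of $T'_v$. Since $k + 2 \ge 5 \ge 2$, $S_v$ cannot be the single-node tree $\{T'_v\}$, so $\{v\}$ must appear. Because $v$'s children in $T'$ are exactly $v_1$ and $u$, the set $other_v$, if present, is a nonempty subset of $\{v_1, u\}$. The cases are: (i) $other_v = \{v_1, u\}$ forces $|S_v| = 2$; (ii) $other_v = \{u\}$ forces $|S_v| = 3$ because $|T'_{v_1}| = 1$; (iii) the remaining structures (either $other_v = \{v_1\}$ or no $other_v$, with $\{v_1\}$ as a separate child) have $\{v\}$, $\{v_1\}$, and a $k$-node summary tree of $T'_u$ as the only components, so $|S_v| = k+2$ and its pseudo-entropy equals that of the $k$-node summary tree of $T'_u$, hence is at most $F_{T'}(u, k)$. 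Cases (i) and (ii) are infeasible for $k \ge 2$, so only (iii) can occur, completing the argument.

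For the base cases $F_{T'}(v, 1) = F_{T'}(u, 1)$ and $F_{T'}(v, 2) = F_{T'}(u, 1)$, I would observe that the total weight of $T'_v$ equals that of $T'_u$, so a 1-node summary tree of either subtree has identical weight and hence identical pseudo-entropy; and the unique 2-node summary tree of $T'_v$ is $\{v\}$ together with $other_v = \{v_1, u\}$, whose two nodes have weights $0$ and $w(T'_u)$, yielding the same pseudo-entropy as $F_{T'}(u, 1)$. I do not foresee a real obstacle; the proof is essentially a mechanical case analysis parallel to Lemma~\ref{lem:0-wt-1-subtree}, and the only care needed is to isolate the small values $k = 1, 2$ (where infeasible configurations would otherwise coincide with feasible ones) from the general formula, which is precisely why the lemma states them separately.
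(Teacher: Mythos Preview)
Your proposal is correct and follows essentially the same approach as the paper, which simply states that the proof is the same as that of Lemma~\ref{lem:0-wt-1-subtree}; you have spelled out the case analysis in more detail than the paper does, but the underlying argument is identical.
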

The proof of this lemma is essentially the same as that of
Lemma~\ref{lem:0-wt-1-subtree}.
The following corollary is immediate.

\begin{corollary}
\label{cor:path-red}
Let $v_1, v_2, \ldots, v_l$, for $l>1$, be a descending path of 0-weight nodes in $T'$
such that each $v_i$, $1 \le i \le l$ either has one child, or
has exactly two children
one of which is a 0-weight leaf.
Further suppose that $l'$ of these nodes are in the second category.
Node $v_l$ must have a child of positive size (as otherwise
$v_1\ne v_l$ would be a size-0 non-leaf).
Let $u$ be the child of $v_l$ of positive size.
Then for $1\le k \le |T'_{v_1}| - (l+l')$,
$F_{T'}(v_1,k+l+l') = F_{T'}(u,k)$;
and for $j \le l+l'$,
$F_{T'}(v_1,j) = F_{T'}(u,1)$.
\end{corollary}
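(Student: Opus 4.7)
\proof
The plan is induction on $l$, the length of the descending path. The base case $l=1$ is exactly one of the two preceding lemmas depending on which category $v_1$ falls into, so there is nothing new to prove there; the increment in the node-count offset is $1$ if $l'=0$ and $2$ if $l'=1$, which matches $l+l'$ in both subcases.

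For the inductive step, suppose the claim holds for every such path of length $l-1$; consider a path $v_1, v_2, \ldots, v_l$ of length $l$. Let $\ell' \in \{0,1\}$ be the contribution of $v_1$ (i.e., $\ell'=1$ iff $v_1$ has a 0-weight-leaf sibling of $v_2$ as its other child), and let $l''$ be the analogous count for $v_2, \ldots, v_l$, so $l' = \ell' + l''$. I would apply Lemma~\ref{lem:0-wt-1-subtree} (when $\ell'=0$) or Lemma~\ref{lem:0-wt-2-subtree} (when $\ell'=1$) with $v=v_1$ and its positive-content child taken to be $v_2$; note that $v_2$ is rooted at a subtree that itself satisfies the hypotheses of the corollary with parameters $l-1$ and $l''$. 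Thus for $k' \ge 1+\ell'$ we obtain $F_{T'}(v_1, k'+1) = F_{T'}(v_2, k')$ when $\ell'=0$ and $F_{T'}(v_1, k'+2) = F_{T'}(v_2, k')$ when $\ell'=1$; in either case $F_{T'}(v_1, k' + 1 + \ell') = F_{T'}(v_2, k')$. Applying the inductive hypothesis to the subpath $v_2, \ldots, v_l$ with parameter $k = k' - (l-1+l'')$ in the admissible range gives $F_{T'}(v_2, k + (l-1) + l'') = F_{T'}(u, k)$, so substituting and using $l+l' = 1 + \ell' + (l-1) + l''$ yields the first claimed identity $F_{T'}(v_1, k+l+l') = F_{T'}(u,k)$ for $1 \le k \le |T'_{v_1}| - (l+l')$.

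For the second identity, the same peeling argument applies: the lemmas give $F_{T'}(v_1, 1) = F_{T'}(v_2, 1)$ and, when $\ell'=1$, also $F_{T'}(v_1, 2) = F_{T'}(v_2, 1)$. Iterating down the path collapses any $j \le l+l'$ to $F_{T'}(u,1)$. Intuitively this is forced: every $v_i$ and every extra 0-weight leaf has weight $0$, so the only positive weight in $T'_{v_1}$ lies in $T'_u$, and with fewer than $l+l'+1$ nodes the summary tree cannot simultaneously isolate a node of $T'_u$ from all the 0-weight ancestors/siblings, so the optimum is realized by the one-node summary of $T'_u$ (possibly padded by 0-weight nodes that contribute $0$ to the pseudo-entropy).

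The only obstacle is bookkeeping: one must verify that the ``admissible range'' of $k'$ after applying Lemma~\ref{lem:0-wt-1-subtree} or~\ref{lem:0-wt-2-subtree} to $v_1$ lines up with the admissible range of $k$ for the inductive hypothesis applied at $v_2$. This is a routine check using $|T'_{v_1}| = |T'_{v_2}| + 1 + \ell'$. Apart from this, the argument is a straightforward telescoping of the two base lemmas along the path. \qed
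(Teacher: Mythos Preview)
Your proposal is correct and takes essentially the same approach the paper intends: the paper gives no explicit proof, merely declaring the corollary ``immediate'' from Lemmas~\ref{lem:0-wt-1-subtree} and~\ref{lem:0-wt-2-subtree}, and your induction on $l$ is exactly the telescoping of those two lemmas along the path that the word ``immediate'' is gesturing at. The minor index-alignment issues you flag as ``routine bookkeeping'' are real but are already present in the lemma statements themselves (e.g., Lemma~\ref{lem:0-wt-1-subtree} as written does not explicitly cover $F_{T'}(v,2)$), so your intuitive argument for the small-$j$ case is an appropriate patch rather than a gap in your reasoning.
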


This corollary implies that given the entropies
of optimal entropy summary trees
at a node $u$ at the bottom of a maximal path of 0-weight nodes, one
can obtain the entropies
of the optimal entropy summary trees at node
$v_1$ at the top of the path in time $O(K)$.

At the remaining nodes in $T'$
we perform the same computation
as in the exact algorithm. As we can show, there are $O(W_0)$ such
nodes, which leads to the following running time bound.

\begin{theorem}
\label{thm:approx-runtime}
The approximation algorithm to obtain a summary tree that has entropy within an additive
$\eps$ of that of the optimal summary  tree runs in time $O(n + W_0 \cdot K^3)$,
where $W_0 = O((K/\eps) \log (K/\eps))$.
\end{theorem}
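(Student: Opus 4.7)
The plan is to run the exact pseudo-entropy dynamic program of Section~\ref{sec:exact} not on $T$ itself but on a structurally reduced tree $\hat T$ whose ``substantive'' nodes number only $O(W_0)$, and to charge every other node of $T$ either to an $O(n)$ preprocessing pass or to $O(K)$-per-chain work enabled by Corollary~\ref{cor:path-red}.

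First I would rescale the weights and apply the algorithmic-discrepancy rounding as in the three-step recipe cited from \cite{KS}, then build $T'$ from $T$ by the $0$-weight-subtree merging operation; both passes take $O(n)$ time and, by the cited results, reduce the problem to computing a maximum-pseudo-entropy summary tree on $T'$ with integer weights summing to $W_0$, at an additive cost of at most $\eps$ in entropy. In the same traversal I would identify every maximal descending chain of $0$-weight nodes satisfying the hypothesis of Corollary~\ref{cor:path-red}; by that corollary all $K$ values of $F_{T'}$ at the top of such a chain can be computed in $O(K)$ time from those at its bottom, so chain processing contributes only $O(K)$ per chain.

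The heart of the argument is the structural bound $|\hat T| = O(W_0)$, where $\hat T$ is the tree on the \emph{skeleton} nodes of $T'$---defined as (i) all positive-weight nodes, (ii) all $0$-weight leaves whose parent is positive-weight or a $0$-weight branching node, and (iii) all $0$-weight internal nodes with at least two \emph{substantial} (i.e., non-$0$-weight-leaf) children---connected by the ``nearest skeleton ancestor'' relation. Let $A$ and $B$ be the numbers of positive-weight leaves and positive-weight internals of $T'$; since the rounded weights are positive integers summing to $W_0$, one has $A + B \le W_0$. Let $D$ be the number of type-(iii) skeleton nodes. Deleting the type-(ii) nodes from $\hat T$ yields a tree $\tilde T$ in which every type-(iii) node still has $\ge 2$ children (by the definition of ``substantial''), so the leaves of $\tilde T$ are all positive-weight and number at most $A+B \le W_0$. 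Since in any rooted tree the number of internal nodes with $\ge 2$ children is at most the number of leaves minus $1$, we get $D \le W_0 - 1$. Finally, each positively-sized node has at most one $0$-weight leaf child after the merging step, so the number of type-(ii) nodes is at most $B+D \le 2W_0$, and $|\hat T| \le A + B + (B+D) + D \le 4W_0$. Since $\hat T$ is a tree, $\sum_{v \in \hat T} d_{\hat T}(v) = |\hat T| - 1 = O(W_0)$.

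With this in hand I would run the exact algorithm at each skeleton non-leaf $v$, treating its children in $\hat T$ as its children and using the $T'$-subtree sizes for the $other_v$ computation. The unoptimized per-node bound from the proof of Theorem~\ref{time}---$O(K^2)$ per child per class of $other_v$, at most $K$ classes, and $d_{\hat T}(v)$ children---yields $O(K^3 d_{\hat T}(v))$ work at $v$, which sums to $O(K^3 W_0)$; sorting children at skeleton nodes contributes $O(W_0 \log W_0)$ and chain shortcuts contribute $O(K \cdot W_0)$, both dominated, and adding the $O(n)$ preprocessing gives the claimed $O(n + K^3 W_0)$ bound. The main obstacle is the counting argument for $|\hat T| = O(W_0)$; once that is in place, the running-time accounting is immediate from the per-node analysis of the exact algorithm.
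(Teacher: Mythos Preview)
Your approach is essentially the same as the paper's: reduce to $T'$, observe that only $O(W_0)$ nodes require the full dynamic program (positive-weight nodes plus $0$-weight branching nodes), handle the remaining $0$-weight chains in $O(K)$ each via Corollary~\ref{cor:path-red}, and apply the naive $O(K^3 d_v)$ per-node bound from Theorem~\ref{time}. The paper's node count is slightly terser (at most $W_0$ positive-weight nodes, at most $W_0-1$ zero-weight branching nodes, at most $W_0-1$ maximal chains), but your skeleton construction reaches the same $O(W_0)$ conclusion; the only substantive difference is that the paper sorts the children by a radix sort on the pairs $(\text{parent},\text{size})$ in time $O(n+W_0)$ rather than your $O(W_0\log W_0)$ comparison sort, which is what makes the sorting term unconditionally dominated by $O(n+K^3W_0)$.
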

\proof
We begin by bounding the numbers of nodes of various types.
Clearly, there are at most $W_0$ non-zero weight nodes.
Thus there are at most $(W_0 -1)$  0-weight nodes
with two or more non-zero weight subtrees.
All other 0-weight nodes are either leaf nodes or lie
on maximal paths of 0-weight nodes with one non-zero
weight subtree.
Further, there can be at most $W_0 - 1$ such maximal paths.

The naive bound on the cost of the computation at a node $v$ in the exact algorithm
is $O(K^3 \cdot d_v)$ (see the proof of Theorem~\ref{time}),
and this applies to the non-zero weight nodes and the 0-weight nodes
with two or more non-zero weight subtrees, giving a cost of $O(K^3 \cdot W_0)$
in total.

The cost of processing the maximal paths of 0 weight is $O(W_0\cdot K)$.

Finally, recall that we need to sort the children in non-decreasing order by size
for each parent node with non-zero weight
or with two or more non-zero weight subtrees.
We implement this by means of  a radix sort on
the pairs $(\text{parent}, W_i)$, over these parent nodes
(recall that $W_i$ is the size of the $i$th child).
There are $O(W_0)$ such pairs, with indices in the ranges $n$ and $W_0$
respectively, yielding a running time of $O(n + W_0)$ for the radix sort.
\qed

In contrast to the $O(nK^2 + n\log n)$ bound for the exact algorithm, here the
sophisticated analysis of Lemma~\ref{greedythm}
cannot be applied.
The reason is that Lemma~\ref{greedythm}
assumes that a tree of $r$ nodes yields at most $r$ optimal
summary trees each having a distinct entropy (so if the $k$-node
and $(k+1)$-node optimal trees have the same entropy only one of
them is counted).
However, the same claim fails to hold for trees having
$r$ {\em positive-weight} nodes (and a total number of nodes
potentially vastly exceeding $r$), as would be needed in order to apply
Lemma~\ref{greedythm} here.

\subsubsection*{Acknowledgments.}
We thank the referees for their helpful suggestions.




\begin{thebibliography}{MMM}
\bibitem{beerman}
Dale Beermann, Tamara Munzner, and Greg Humphreys,
        ``Scalable, Robust Visualization of Very Large Trees,''
        {\em Proc. EuroVis}, 2005, 37--44.
\bibitem{card}
S. K. Card and D. Nation, ``Degree-Of-Interest Trees: A Component of an Attention-Reactive User Interface,''
{\em Proceedings of the Working Conference on Advanced Visual Interfaces}, 2002, 231--245.
\bibitem{heer}
Jeffrey Heer and Stuart K. Card,
``DOI Trees Revisited: Scalable, Space-Constrained Visualization of Hierarchical Data,"
in {\em Advanced Visual Interfaces}, 2004,
\url{//vis.stanford.edu/papers/doitrees-revisited}, 421--424.
\bibitem{KS}
Howard Karloff and Ken Shirley, ``Maximum Entropy Summary Trees," Eurovis
2013; 
\url{www2.research.att.com/~kshirley/KarloffShirleyWebsite.pdf}.
\bibitem{lamping}
John Lamping, Ramana Rao, and Peter Pirolli,
        ``A Focus+Context Technique Based on Hyperbolic Geometry For Visualizing Large Hierarchies,''
        {\em Proceedings of the SIGCHI Conference on Human Factors in Computing Systems},
        \url{//dx.doi.org/10.1145/223904.223956}, 1995,
401--408.
\bibitem{vL}
T. von Landesberger, A. Kuijper, T. Schreck, J. Kohlhammer, J. J. van Wijk, J.-D.
Fekete and D. W. Fellner,
        ``Visual Analysis of Large Graphs: State-of-the-Art and Future Research Challenges,"
        {\em Computer Graphics Forum}, vol. 30, no. 6, 2011, 1719-1749.
\bibitem{munzner}
T. Munzner, R. Guimbretiere, S. Tasiran, L. Zhang, and Y. Zhou,
        ``TreeJuxtaposer: Scalable tree comparison using Focus+Context with guaranteed visibility,''
        {\em ACM Transactions on Graphics}, vol. 22, no. 3, 2003, 453--462.
\bibitem{naudts}
J. Naudts, ``Continuity of a Class of Entropies and Relative
Entropies," {\em Reviews in Mathematical Physics} 16, 6 (2004),
809-822.
\bibitem{shneiderman}
Ben Shneiderman, ``The Eyes Have It: A Task by Data Type Taxonomy for Information Visualization,''
        {\em Proceedings of the IEEE Symposium on Visual Languages}, 1996, 336--343.
\end{thebibliography}
\end{document}